
\documentclass[journal]{IEEEtran}
%

\usepackage[table]{xcolor}
\usepackage{times}
\usepackage{epsfig}
\usepackage{amsmath}
\usepackage{amsthm}
\usepackage{amsfonts}
\usepackage{graphicx}
\usepackage{amssymb}
\usepackage{amstext}
\usepackage{latexsym}
\usepackage{color,colortbl}
\usepackage{ifthen}
\usepackage{multirow}
\usepackage{verbatim}
\usepackage{array,tabularx}
\usepackage{arydshln}
\usepackage[mathscr]{euscript}
\usepackage{accents}
 \usepackage{cite}
\usepackage{hhline}
\usepackage{caption}
\usepackage{subcaption}
\usepackage{enumerate}
\usepackage{xcolor}
\usepackage{mathtools}
\usepackage{url}
\usepackage{xparse}
\usepackage{makecell}
\usepackage{varwidth}
\usepackage{bm}
\usepackage{arydshln}
\captionsetup[table]{position=bottom}

\usepackage{comment}
\usepackage{wrapfig}

\usepackage{caption}
\usepackage{float}
\usepackage{booktabs}

\usepackage{nidanfloat}
\usepackage{subcaption}

\usepackage{multirow}

\newcolumntype{C}[1]{>{\centering\let\newline\\\arraybackslash\hspace{0pt}}m{#1}}

\usepackage[normalem]{ulem}

\usepackage{amsmath,pgfplots,amssymb}
\usepackage{subcaption,graphicx}

\newtheorem{theorem}{Theorem}
\newtheorem{restate}{Theorem}
\newtheorem{lemma}{Lemma}

\newtheorem{proposition}{Proposition}

\theoremstyle{definition}
\newtheorem{example}{Example}
\newtheorem{definition}{Definition}

\theoremstyle{definition}
\newtheorem{remark}{Remark}

\theoremstyle{definition}
\newtheorem*{construction}{Construction}

\DeclareMathOperator{\lcm}{lcm}
\DeclareMathOperator{\lift}{lift}

\usetikzlibrary{matrix,decorations.pathreplacing}


%

%

%
\ifCLASSINFOpdf
\else
\fi
\hyphenation{op-tical net-works semi-conduc-tor}

\begin{document}
%
\title{One-Shot PIR: Refinement and Lifting}
%
%
%

\author{Rafael G.L. D'Oliveira, Salim El Rouayheb \thanks{This work was supported in part by NSF Grant CCF 1817635 and presented in part at the IEEE International Symposium on Information Theory 2018.} \\ ECE, Rutgers University, Piscataway, NJ\\ Emails: rafael.doliveira@rutgers.edu, salim.elrouayheb@rutgers.edu }

\maketitle

\begin{abstract}
We study a class of private information retrieval (PIR) methods that we call one-shot schemes. The intuition behind one-shot schemes is the following. The user's query  is regarded as a   dot product of a query vector and the message  vector (database) stored at multiple servers.
 
Privacy, in an information theoretic sense, is then achieved by encrypting the query vector using a secure linear code, such as secret sharing. 

Several PIR schemes in the literature, in addition to novel ones constructed here, fall into this class.
One-shot schemes provide an insightful link between  PIR and data security  against eavesdropping. However, their download rate is not optimal, i.e., they do  not achieve the PIR capacity. 
Our main contribution is two transformations of one-shot schemes, which we call refining and lifting.  We show that refining and lifting one-shot schemes gives capacity-achieving schemes for the cases when the PIR capacity is known. In the other cases, when the PIR capacity is still unknown, refining and lifting one-shot schemes gives the best download rate so far.

\end{abstract}

%
\IEEEpeerreviewmaketitle

\section{Introduction}

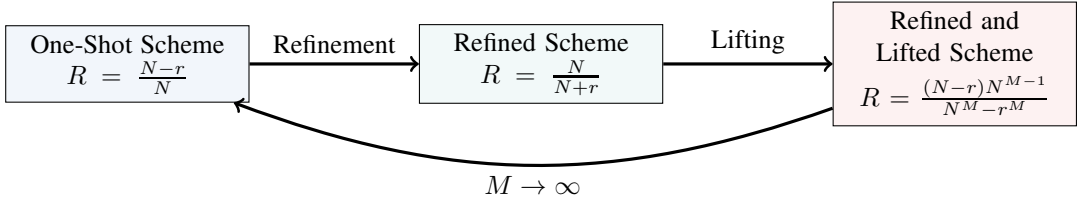
\begin{figure*}[!t]
    \centering
    
    \definecolor{bostonuniversityred}{rgb}{0.8, 0.0, 0.0}
    \definecolor{darkcyan}{rgb}{0.0, 0.55, 0.55}
    \definecolor{cobalt}{rgb}{0.0, 0.28, 0.67}

\tikzstyle{int1}=[draw, fill=cobalt!5, minimum size=2em,  text width=3cm,align=center]
\tikzstyle{int2}=[draw, fill=darkcyan!5, minimum size=2em,  text width=3cm,align=center]
\tikzstyle{int3}=[draw, fill=bostonuniversityred!5, minimum size=2em,  text width=3cm,align=center]
\tikzstyle{init} = [pin edge={to-,thin,black}]

\begin{tikzpicture}[node distance=5.5cm,auto]
    \node [int1] (a) {One-Shot Scheme \\ $R = \frac{N-r}{N}$};
    \node [int2] (c) [right of=a] {Refined Scheme \\ $R = \frac{N}{N+r}$};
    \node [int3] (d) [right of=c] {Refined and Lifted Scheme \\[5pt] $R=\frac{(N-r)N^{M-1}}{N^M - r^M}$};

    \path[->,line width=1.2pt] (a) edge node {Refinement} (c);
    \path[->,line width=1.2pt] (c) edge node {Lifting} (d);
    \path[->,bend left=20,line width=1.2pt] (d) edge node {$M \rightarrow \infty$} (a);
    
\end{tikzpicture}

\caption{Given a One-shot scheme we can apply the Refinement Lemma to obtain a refined scheme with a better rate for $M=2$ messages. Lifting this scheme, through the Lifting Theorem, we obtain a refined and lifted scheme with better rate for any number, $M$, of messages. When $M \rightarrow \infty$ the rate of the refined and lifted scheme converges to that of the one-shot scheme.}

\label{fig:results}

\end{figure*}

%
%
%
%
\IEEEPARstart{W}{e} consider the  problem of designing  private information retrieval (PIR) schemes on coded data stored on multiple servers that can possibly collude. In this setting, a user wants to download a message from a server with $M$ messages while revealing no information about which message it is interested in. The database is  replicated on $N$ servers, or, in general, is stored using an erasure code, typically a Maximum Distance Separable (MDS) code. Some of these servers could possibly collude to gain information about the identity of the user's retrieved message. 

The PIR problem was first introduced and studied in \cite{PIR1995} and \cite{chor1998private} and was followed up by a large body  of work (e.g. \cite{gasarch2004survey,sun2016capacitynoncol, sun2016capacity, yekhanin2010private, beimel2001information, beimel2002breaking}). The model there assumes the database to be replicated and focuses on PIR schemes with efficient total communication rate, i.e., upload and download. Motivated by big data applications and recent advances in the theory of codes for distributed storage, there has been a growing interest in designing PIR schemes that can query data that is stored in coded form and not just replicated. For this setting, the assumption has been that the messages being retrieved are very large (compared to the queries) and therefore the focus has been on designing PIR schemes that minimize the  download rate.  Despite  significant recent progress, the problem of characterizing the optimal PIR download rate (called PIR capacity) in the case of coded data and server collusion remains open in general. \mbox{}\\

\noindent{\em{Related work:}} For replicated data,  the problem of finding the PIR capacity,  i.e., minimum download rate, is solved. It was shown in \cite{sun2016capacitynoncol} and \cite{sun2016capacity} that the  PIR capacity  is 
\begin{align} \label{eq:jafarcapacity}
    C = \frac{(N-T)N^{M-1}}{N^M - T^M} ,
\end{align}
where $N$ is the number of servers, $T$ is the number of colluding servers and $M$ is the number of messages. Capacity achieving PIR schemes were also presented in \cite{sun2016capacitynoncol} and \cite{sun2016capacity}. 

When the data is coded  and stored on a large number of servers (exponential in the number of messages), it was shown  in \cite{shah2014one} that downloading one extra bit is enough to achieve privacy. In \cite{chan2014private}, the authors derived bounds on the tradeoff between storage cost and download cost for linear coded data and studied properties of PIR schemes on MDS data. Explicit constructions of efficient PIR scheme on MDS data were first presented in  \cite{tajeddine2016private} for both collusions and no collusions. Improved PIR schemes for MDS coded data with collusions were presented in \cite{freij2016private}. PIR schemes for general linear codes, not necessarily MDS, were studied in \cite{kumar2017private}. The PIR capacity for MDS coded data and no collusion was determined in  \cite{banawan2016capacity}, and remains unknown for the case of collusions, the topic of this paper. Table \ref{table:known} summarizes the PIR capacity results known in the literature prior to this work. 

\begin{table}[H]
\centering
\begingroup
\renewcommand{\arraystretch}{2}
\begin{tabular}{l l l }
 & No Collusion &  $T$-Collusion \\ \toprule
Replication &   $C = \frac{N^M - N^{M-1}}{N^M - 1}$  &  $C = \frac{(N-T)N^{M-1}}{N^M - T^M}$    \\[5pt]
$(N,K)$-MDS & $C= \frac{(N-K)N^{M-1}}{N^M - K^M}$ & $R = \frac{N-K-T+1}{N}$ \\ \bottomrule \vspace{-5pt}
\end{tabular}
\endgroup

\caption{Summary of the best known rates in the literature, where $C$ denotes that the rate is capacity achieving.}
\label{table:known}
\end{table}

 One of the main results of this work is improving the achievable PIR rate for coded date with collusion from the rate $R = \frac{N-K-T+1}{N}$ in the bottom right corner of Table \ref{table:known} to $R=\frac{(N-K-T+1)N^{M-1}}{N^M - (K+T-1)^M}$ in Theorem \ref{teo:holl}. Before elaborating more on this, we need to introduce the class of one-shot PIR schemes. \mbox{}\\

\noindent{\em One-Shot Schemes:} A natural approach to PIR is to hide the queries from the servers using a linear secure code. We revisit the two-server toy-example  in \cite{PIR1995}, but interpret it as a Shannon one-time pad scheme \cite{shannon1949communication}.

\begin{example}[One-Time Pad PIR Scheme] \label{ex: intro}

Suppose $M$ single bit messages, $\bm{D}_1, \ldots, \bm{D}_M \in \mathbb{F}_2$, are stored, replicated, onto two non-colluding servers and a user is interested in retrieving $\bm{D}_i$ privately. The user can use the following scheme based on the one-time pad, where $\langle \bm{D} , \bm{q} \rangle$ denotes the inner product between the vectors $\bm{D}, \bm{q} \in \mathbb{F}_2^M$.

\begin{table}[H]
\centering
\begin{tabular}{c c c }
 & Server $1$ &  Server $2$ \\
\toprule
Queries & $\bm{q}$ & $\bm{q}+\bm{e}_i$   \\
Responses & $\langle \bm{D} , \bm{q} \rangle$ & $\langle \bm{D} , \bm{q}+\bm{e}_i \rangle$ \\ 
\end{tabular}
\caption{Query and response structure for hiding queries.}
\label{table:intro}
\end{table}

The queries in Table \ref{table:intro} satisfy the following properties:
\begin{itemize}
\item The vector $\bm{q} \in \mathbb{F}^M_2$ is chosen uniformly at random.
\item The vector $\bm{e}_i$ is the $i$-th vector of the standard basis of $\mathbb{F}_2^M$.
\end{itemize}

The user can retrieve $\bm{D}_i$ by summing both responses,
\begin{align} \label{equ:intro decoding}
\langle \bm{D} , \bm{q} \rangle + \langle \bm{D} , \bm{q}+\bm{e}_i \rangle = \langle \bm{D} , \bm{e}_i \rangle = \bm{D}_i .  
\end{align}
The scheme is private since, from any of the servers' point of view, the query they received is a uniformly random vector. The PIR rate of this scheme is $1/2$ since the user downloaded $2$ bits, $\langle \bm{D} , \bm{q} \rangle$ and $\langle \bm{D} , \bm{q}+\bm{e}_i \rangle$, to retrieve $1$ bit, $\bm{D}_i$.
\end{example}

In Example \ref{ex: intro}, the user sends two types of queries to the servers: ``noise'' queries, which provide no useful information to the user, like the query, $\bm{q}$, sent to Server 1, and  ``mixed'' queries, which hide the information being retrieved from the server, like the query, $\bm{q}+\bm{e}_i$, sent to Server 2.

A one-shot scheme follows this same structure. The user sends ``noise'' queries to the first $r$ servers and  ``mixed'' queries to the remaining $N-r$ servers. We call the parameter, $r$, denoting the number of ``noise'' queries, the codimension of the one-shot scheme. The PIR rate of a one-shot scheme is determined by its codimension and is given by $\frac{N-r}{N}$. \mbox{}\\

\noindent{\em{Contributions:}} In what follows, $N$ is the number of servers, any $T$ of which may collude, $M$ is the number of messages, and the messages are stored as an $(N,K)$-MDS code.

Our main result is a constructive combinatorial procedure, that we call refinement and lifting, presented in Sections \ref{sec:refine} and \ref{sec:lift}, which improves the rate of any one-shot scheme.

\begin{theorem}[The Lifting Theorem] \label{teo: main}
Any one-shot scheme with co-dimension $r$ and, consequently, rate $(N-r)/N$ can be refined and lifted to a PIR scheme with rate \[ R = \frac{(N-r)N^{M-1}}{N^M - r^M} > \frac{N-r}{N} .\]
\end{theorem}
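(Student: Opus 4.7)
The plan is to compose two operations—refinement and lifting—each of which is developed as a standalone construction in Sections \ref{sec:refine} and \ref{sec:lift}. I would first invoke the Refinement Lemma on the given one-shot scheme, which starts from rate $(N-r)/N$ and produces, for $M=2$ messages, a scheme of rate $N/(N+r)$. This matches the target formula at $M=2$, since $(N-r)N^{M-1}/(N^M - r^M)$ evaluated at $M=2$ simplifies to $(N-r)N/((N-r)(N+r)) = N/(N+r)$, so refinement serves as the base case.

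Next, I would lift this $M=2$ scheme to arbitrary $M$ by an inductive construction in the spirit of Sun--Jafar. At each ``level'' $k$, the user downloads blocks consisting of sums of $k$ distinct messages. Symbols at level $k$ that involve the desired message are useful, while the remaining symbols are treated as interference; by reusing the structure of the one-shot code, this interference aligns with the useful downloads at level $k+1$ and can be subtracted there. Privacy is maintained because the lifted construction is symmetric in the message index: every message plays an interchangeable role in the query distribution, so the joint view of any subset of servers is independent of the desired index, exactly as in the base one-shot scheme.

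The rate calculation then reduces to bookkeeping of useful versus total downloaded symbols in the layered structure. A direct count gives $(N-r)N^{M-1}$ useful symbols and $N^M - r^M$ total downloaded symbols, immediately yielding the claimed rate. The strict inequality $(N-r)N^{M-1}/(N^M-r^M) > (N-r)/N$ is then equivalent to $N^M > N^M - r^M$, which holds for $r \geq 1$.

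The main obstacle I expect is the privacy argument for the lifted scheme rather than its rate: when the underlying one-shot scheme uses a general linear secret-sharing code to protect against $T$-collusion (not merely a one-time pad), I must check that the interference-alignment step, which introduces algebraic relations across levels, does not leak information about the desired index to any $T$ colluding servers. The cleanest way I see to handle this is to describe lifting as a product-type construction on the query code, so that the privacy of the lifted scheme follows from a single invariance property—distribution of the queries being uniform over cosets of the base secure code—inherited from the one-shot scheme. The combinatorial rate count, while delicate, should then reduce to a geometric-series identity that collapses to $N^M - r^M$.
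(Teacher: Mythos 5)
Your proposal follows essentially the same route as the paper: refine the one-shot scheme to obtain the $M=2$ base case at rate $N/(N+r)$, then recursively lift to larger $M$ by introducing $k$-queries (the paper formalizes this with a symbolic matrix, handling the ``leftover'' queries via shifted copies), with privacy inherited by symmetry from the one-shot scheme and the rate falling out of the count $\sum_k \#(k,S_M)\binom{M-1}{k-1} = N^{M-1}$ useful symbols over $\sum_k \#(k,S_M)\binom{M}{k} = (N^M - r^M)/(N-r)$ total. The one spot you flag as delicate—how the leftover interference from one level feeds into the next—is indeed where the paper puts its combinatorial work (the shift operator $\sigma$ and the matrix $A$ in Definition~\ref{def: symbolic matrix}), but your high-level description matches that mechanism.
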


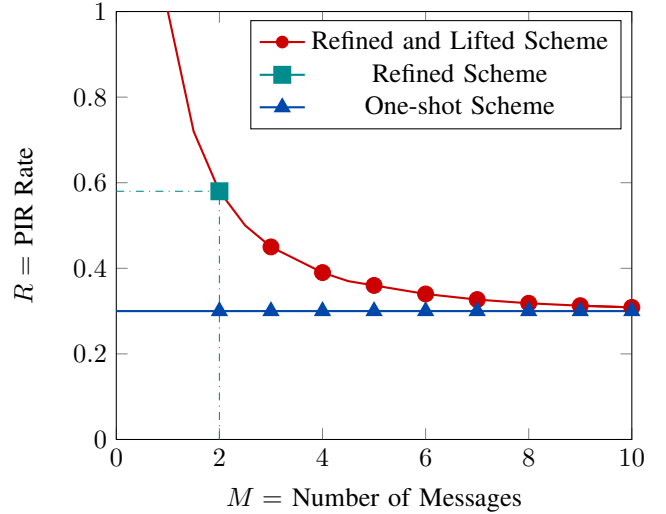
\begin{figure}[H]
    \centering
    \definecolor{bostonuniversityred}{rgb}{0.8, 0.0, 0.0}
    \definecolor{darkcyan}{rgb}{0.0, 0.55, 0.55}
    \definecolor{cobalt}{rgb}{0.0, 0.28, 0.67}
    
\begin{tikzpicture}
\begin{axis}[xlabel={$M=$ Number of Messages}, ylabel={$R=$ PIR Rate},xmin=0, xmax=10,ymin=0,ymax=1, samples=50]

  \addplot[bostonuniversityred,   thick, domain=1:20, mark = *, mark options = {solid}, mark size = 2] coordinates{(9,0.3126)(10,0.3087)};
  \addlegendentry{Refined and Lifted Scheme}

  \addplot[bostonuniversityred,   thick, domain=1:20 , forget plot] coordinates{(0.5,1.83)(1,1)(1.5,0.72)(2,0.58)(2.5,0.50)(3,0.45)(3.5,0.42)(4,0.39)(4.5,0.37)(5,0.36)(5.5,0.35)(6,0.34)(7,0.3269)(8,0.3183)(9,0.3126)(10,0.3087)};

  \addplot[bostonuniversityred, domain=1:20, mark = *, mark options = {solid}, mark size = 3, only marks, forget plot] coordinates{(3,0.45)(4,0.39)(5,0.36)(6,0.34)(7,0.3269)(8,0.3183)(9,0.3126)(10,0.3087)};
  
  \addplot[darkcyan,   thick, mark=square* , mark options = {solid}, mark size = 3] coordinates{(2,0.58)};
  \addlegendentry{Refined Scheme}
  
  \addplot[darkcyan,  dashdotted, forget plot] coordinates{(2,0)(2,0.58)};

  \addplot[darkcyan,  dashdotted, forget plot] coordinates{(0,0.58)(2,0.58)};

  \addplot[cobalt,   thick, domain=0:20, mark = triangle*, mark options = {solid}, mark size = 3 ] coordinates{(2,0.3)(3,0.3)(4,0.3)(5,0.3)(6,0.3)(7,0.3)(8,0.3)(9,0.3)(10,0.3)};
  \addlegendentry{One-shot Scheme}
  
  \addplot[cobalt,thick, domain=0:20, forget plot ] coordinates{(0,0.3)(2,0.3)};

\end{axis}
\end{tikzpicture}
\caption{The schemes in Figure \ref{fig:results} for $N=10$ and $r=7$.}
\end{figure}
More so, refining and lifting one-shot schemes, we obtain capacity-achieving schemes for the cases when the PIR capacity is known and the best known rates for when the capacity is not known.

Applying Theorem \ref{teo: main} to the one-shot scheme in \cite{freij2016private}, we obtain the first PIR scheme to achieve the rate conjectured to be optimal\footnote{The optimality of the rate in Equation \ref{eq:hollanti intro} was disproven in  \cite{jafar2018conjecture} for some parameters. For the remaining range of parameters, the PIR schemes obtained here, in Theorem \ref{teo:holl}, through refining and lifting, achieve the best rates known so far in the literature.} in \cite{freij2016private} for MDS coded data with collusions.

\begin{theorem} \label{teo:holl}
Refining and lifting the scheme presented in \cite{freij2016private} we obtain a PIR scheme with rate
\begin{align} \label{eq:hollanti intro}
R = \frac{(N-K-T+1)N^{M-1}}{N^M - (K+T-1)^M} .
\end{align}
\end{theorem}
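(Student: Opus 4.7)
The plan is to identify the Freij--Hollanti PIR construction of \cite{freij2016private} as a one-shot scheme in the sense defined earlier in the paper, and then apply the Lifting Theorem (Theorem \ref{teo: main}) to obtain the stated rate.

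First, I would recall the structure of that scheme. For data stored via an $(N,K)$-MDS code with $T$ colluding servers, each server receives a query formed as the componentwise (star) product of a codeword from the storage code and a codeword from an $(N,T)$-MDS ``privacy mask'' code that encodes the retrieval index. By the Singleton-type star-product bound, the resulting query code has dimension at most $K+T-1$, so the achievable rate of the scheme is $(N-K-T+1)/N$, matching the bottom-right entry of Table \ref{table:known}.

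Next, I would perform a linear change of basis on the query code so that the $N$ queries split into two groups: the first $r := K+T-1$ servers receive uniformly random ``noise'' queries that carry no information about the desired message index $i$, while the remaining $N-r = N-K-T+1$ servers receive ``mixed'' queries of the noise-plus-$\bm{e}_i$ type. The MDS property guarantees that such a change of basis exists, and the $(N,T)$-MDS mask ensures that any $T$ of the resulting queries remain uniformly distributed, preserving $T$-privacy. This exhibits the Freij--Hollanti construction as a one-shot scheme with codimension $r = K+T-1$.

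Applying Theorem \ref{teo: main} to this one-shot scheme then yields a refined and lifted PIR scheme of rate
\[
R \;=\; \frac{(N-r)N^{M-1}}{N^M - r^M} \;=\; \frac{(N-K-T+1)N^{M-1}}{N^M - (K+T-1)^M},
\]
which is exactly the rate claimed. The main obstacle is the first step: carefully verifying that the change of basis can be carried out while preserving both decodability and $T$-privacy as demanded by the one-shot definition. This amounts to a linear-algebra argument that exploits the MDS structure of the star-product code to isolate a subset of $r$ ``noise'' coordinates whose restriction is full rank on every $T$-subset of servers.
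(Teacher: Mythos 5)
Your proposal takes essentially the same route as the paper: identify the scheme of \cite{freij2016private} as a one-shot scheme of codimension $r = K+T-1$ (the paper asserts this ``in our notation,'' grounded in the star-product response code having dimension $K+T-1$) and then substitute $r = K+T-1$ into the Lifting Theorem to obtain the claimed rate. The paper does not itself spell out the change-of-basis step you flag as the main obstacle, so your attempt offers slightly more justification than the paper's own statement without diverging from its approach.
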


Another interesting consequence of our procedure is that applying it to a $T$-threshold linear secret sharing scheme we obtain the same, capacity-achieving, rate as the scheme presented in \cite{sun2016capacity} but with less queries. 

\begin{theorem}
Refining and lifting a $T$-threshold linear secret sharing scheme we obtain a PIR scheme with capacity-achieving rate
\[ R_{\mathcal{Q}} = \frac{(N-T)N^{M-1}}{N^M - T^M} .\]
\end{theorem}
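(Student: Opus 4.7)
The strategy is to recognize that this theorem is essentially an application of Theorem \ref{teo: main} to a specific one-shot scheme. Given the mapping $r \mapsto R = (N-r)N^{M-1}/(N^M-r^M)$ supplied by the Lifting Theorem, the target rate $R_{\mathcal{Q}}$ forces $r = T$, so the whole argument reduces to exhibiting a one-shot PIR scheme, built from a $T$-threshold linear secret sharing scheme, whose codimension equals exactly $T$.

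First I would construct the underlying one-shot scheme. Given a $T$-threshold linear secret sharing scheme (e.g., Shamir's scheme over a sufficiently large field), the user wishing to retrieve $\bm{D}_i$ secret-shares the indicator vector $\bm{e}_i \in \mathbb{F}^M$ among the $N$ servers. By the structure of a $T$-threshold linear scheme, the $N$ shares can be generated from $T$ uniformly random seed vectors together with the secret $\bm{e}_i$. After an appropriate change of basis on the servers, $T$ of the resulting queries are marginally uniform and independent of $\bm{e}_i$---these are the \emph{noise} queries---while the remaining $N-T$ queries are affine in $\bm{e}_i$ modulo the noise---these are the \emph{mixed} queries. Each server returns the inner product of its query with the (MDS-encoded) database, and the user recovers $\bm{D}_i = \langle \bm{D}, \bm{e}_i \rangle$ by a linear combination of the responses, mimicking Equation \eqref{equ:intro decoding}. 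Privacy against any $T$ colluding servers is inherited verbatim from the secret sharing threshold property.

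Next I would verify that this construction fits the paper's definition of a one-shot scheme with codimension exactly $r = T$: the $T$ pure-noise queries form the codimensional part, and the remaining $N-T$ mixed queries match the template. Once that identification is justified, Theorem \ref{teo: main} applies directly and yields
\[
R_{\mathcal{Q}} = \frac{(N-T)N^{M-1}}{N^M - T^M},
\]
which equals the PIR capacity \eqref{eq:jafarcapacity} established in \cite{sun2016capacity}.

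The only non-routine step is pinning down the codimension. The main (mild) obstacle is arguing that the scheme is \emph{tight} with respect to the threshold, i.e., that it uses exactly $T$ noise queries rather than more---otherwise the rate after lifting would be strictly smaller than capacity. Tightness is immediate for Shamir-style constructions by counting degrees of freedom in the sharing polynomial, and more generally follows from standard properties of MDS-based linear secret sharing. Everything beyond this identification is absorbed into the already proven Lifting Theorem, so no further analysis is required; in particular the query-count improvement over \cite{sun2016capacity} comes for free from the combinatorial efficiency of refinement and lifting.
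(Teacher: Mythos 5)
Your overall route is the same as the paper's implicit proof: exhibit a one-shot scheme of codimension $T$ from a $T$-threshold linear secret sharing scheme, then apply Theorem~\ref{teo: main}, and match the resulting rate with the capacity \eqref{eq:jafarcapacity}. That reduction is exactly right, and the algebra once $r=T$ is fixed is trivial.

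The gap is in the middle step: the one-shot scheme you describe does not satisfy Definition~\ref{def: one-shot scheme} with codimension $T$ when $N-T>1$. You have the user secret-share the single indicator $\bm{e}_i$, so that all $N-T$ mixed queries carry the \emph{same} informative part. Their responses are therefore all scalar multiples of $\langle \bm{D}, \bm{e}_i\rangle$ and fail Property~\ref{property2} (linear independence of $\langle \bm{D}_{r+1},\bm{a}_1\rangle,\dots,\langle \bm{D}_N,\bm{a}_{N-r}\rangle$); the scheme would retrieve one useful symbol out of $N$ downloads, not $N-T$, so you cannot invoke Proposition~\ref{pro:oneshotpir} to conclude the rate is $(N-T)/N$. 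The paper sidesteps this by taking message length $L=N-T$ (queries live in $\mathbb{F}_q^{M(N-T)}$, not $\mathbb{F}^M$) and putting a \emph{different} standard basis vector $\bm{e}_1,\dots,\bm{e}_{N-T}$ into each of the $N-T$ mixed queries: the noise structure of the $T$-threshold scheme is reused, but the informative parts are $N-T$ distinct coordinates of the desired message, giving $N-T$ linearly independent useful responses per round. Your ``degrees of freedom'' remark about Shamir counts the noise dimension correctly, but that alone does not yield $N-T$ independent informative queries; you have to explicitly install them. Once you amend the construction this way, the codimension is genuinely $T$, and the rest of your argument goes through verbatim.
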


\section{Setting}

A set of $M$ messages, $\{ \bm{W}^1, \bm{W}^2, \ldots, \bm{W}^M \} \subseteq \mathbb{F}_q^L$, are stored on $N$ servers each using an $(N,K)$-MDS code. We denote by $ \bm{W}_i^j \in \mathbb{F}_q^{L/K}$, the data about $\bm{W}^j$ stored on server~$i$.

\begin{table}[H]
\centering
\begingroup
\renewcommand{\arraystretch}{1.5}
\begin{tabular}{ c  c  c  c}
Server $1$ & Server $2$ &$\cdots$ & Server $N$ \\
\toprule
 $\bm{W}_1^1$ & $\bm{W}_2^1$ &$\cdots$ & $\bm{W}_N^1$ \\
 $\bm{W}_1^2$ & $\bm{W}_2^2$ &$\cdots$ & $\bm{W}_N^2$  \\
 $\vdots$ & $\vdots$ & $\vdots$ & $\vdots$  \\
 $\bm{W}_1^M$ & $\bm{W}_2^M$ &$\cdots$ & $\bm{W}_N^M$ \\
\end{tabular}
\endgroup

\caption{Data stored in each server.}
\end{table}

Since the code is MDS, each $\bm{W}^j$ is determined by any $K$-subset of $\{ \bm{W}^j_1 , \ldots, \bm{W}^j_N \}$.

The data on server $i$ is $\bm{D}_i = (\bm{W}^1_i , \ldots, \bm{W}^M_i) \in \mathbb{F}_q^{ML/K}$.

A \emph{linear query} (from now on we omit the term linear) is a vector $\bm{q} \in \mathbb{F}_q^{ML/K}$. When a user sends a query $\bm{q}$ to a server $i$, this server answers back with the inner product $\langle \bm{D}_i , \bm{q} \rangle \in \mathbb{F}_q$.

The problem of private information retrieval can be stated informally as follows: A user wishes to download a file $\bm{W}^m$ without leaking any information about $m$ to any of the servers where at most $T$ of them may collude. The goal is for the user to achieve this while minimizing the download cost.

The messages $\bm{W}^1, \bm{W}^2, \ldots, \bm{W}^M$ are assumed to be independent and uniformly distributed elements of $\mathbb{F}_q^L$. The user is interested in a message $\bm{W}^m$. The index of this message, $m$, is chosen uniformly at random from the set $\{1,2,\ldots,M\}$.

A \emph{PIR scheme} is a set of queries for each possible desired message $\bm{W}^m$. We denote a scheme by $\mathcal{Q} = \{ \mathcal{Q}^1, \ldots, \mathcal{Q}^M \}$ where $\mathcal{Q}^m = \{ Q_1^m, \ldots, Q_N^m \}$ is the set of queries which the user will send to each server when they wish to retrieve $\bm{W}^m$. So, if the user is interested in $\bm{W}^m$, $Q^m_i$ denotes the set of queries sent to server $i$. The set of answers,  $\mathcal{A} = \{ \mathcal{A}^1, \ldots, \mathcal{A}^M \}$, is defined analogously.

A PIR scheme should satisfy two properties:
\begin{enumerate}
\item Correctness: $H(\bm{W}^m | \mathcal{A}^m) = 0$.
\item $T$-Privacy:  $I(\cup_{j \in J} \mathcal{Q}_j^m ; m) = 0$, for every $J \subseteq [M]$ such that $|J|=T$, where $[M]=\{1, \ldots, M \}$.
\end{enumerate}

Correctness guarantees that the user will be able to retrieve the message of interest. $T$-Privacy guarantees that no $T$ colluding servers will gain any information on the message in which the user is interested.

\begin{definition}
Let $M$ messages be stored using an $(N,K)$-MDS code on $N$ servers. An $(N,K,T,M)$-PIR scheme is a scheme which satisfies correctness and $T$-Privacy.
\end{definition}

Note that $T$-Privacy implies in  $|\mathcal{Q}^1|=|\mathcal{Q}^i|$ for every $i$, i.e., the number of queries does not depend on the desired message.

\begin{definition}
The \emph{PIR rate} of an $(N,K,T,M)$-PIR scheme $\mathcal{Q}$ is $R_\mathcal{Q}= \frac{L}{|\mathcal{Q}^1|}$.
\end{definition}

The PIR rate is the reciprocal of the download cost.

\section{One-Shot Schemes}

In this section, we introduce the notion of a one-shot scheme. The main idea behind one-shot schemes is encrypting the queries from the servers, as shown in Example \ref{ex: intro}. 

In the last two subsections, we give examples of known schemes from the literature, which we interpret as one-shot schemes, and present new ones.

\subsection{Definition and Examples}

Throughout this paper we assume, without loss of generality, that the user is interested in retrieving the first message. 

We denote the subspace of queries which only query the first message by $\mathbb{V}_1 = \{ \bm{a} \in \mathbb{F}_q^{ML/K}  : i>L/K \Rightarrow \bm{a}_i = 0 \}$.

\begin{definition}[One-Shot Schemes] \label{def: one-shot scheme}
An $(N,K,T,M)$-one-shot scheme of codimension $r$ is a scheme in which the queries have the following form.

\begin{table}[H]
\centering
\begin{tabular}{c c c c c c}
Server $1$ & $\cdots$ & Server $r$ & Server $r+1$ & $\cdots$ & Server $N$ \\
\toprule
$\bm{q}_1$ & $\cdots$ & $\bm{q}_r$ & $\bm{q}_{r+1}+\bm{a}_1 $ & $\cdots$ & $\bm{q}_N+\bm{a}_{N-r}$  \\ 
\end{tabular}
\caption{Query structure for a one-shot scheme.}
\label{table:oneshot}
\end{table}

The queries in Table \ref{table:oneshot} satisfy the following properties:
\begin{enumerate}
\item The $\bm{q}_1, \ldots, \bm{q}_N \in \mathbb{F}_q^{ML/K}$ and $\bm{a}_1, \ldots, \bm{a}_{N-r} \in \mathbb{V}_1$.
\item Any collection of $T$ queries sent to the servers is uniformly and independently distributed. \label{property1}
\item The $\bm{a}_1, \ldots, \bm{a}_{N-r} \in \mathbb{V}_1$ are such that the responses $\langle \bm{D}_{r+1} , \bm{a}_1 \rangle , \ldots, \langle \bm{D}_N , \bm{a}_{N-r} \rangle$ are linearly independent. \label{property2}
\item For every $i>r$, the response $\langle \bm{D}_i , \bm{q_i} \rangle$ is a linear combination of $\langle \bm{D}_1 , \bm{q}_1 \rangle , \ldots, \langle \bm{D}_r , \bm{q_r} \rangle$. \label{property3}
\end{enumerate}

We refer to queries of the form $\bm{q}_* + \bm{a}_*$ as mixed queries, where the noisy query $\bm{q}_*$ is hiding the informative query $\bm{a}_*$.
\end{definition}

Property \ref{property1} ensures $T$-privacy. Properties \ref{property2} and \ref{property3} ensure correctness. The PIR rate of a one-shot scheme is a function, solely, of its codimension and is given in Proposition \ref{pro:oneshotpir}. First, we need the following definition.

\begin{definition}[Decoding Equations]
Consider the one-shot scheme in Definition \ref{def: one-shot scheme}. Property \ref{property3} implies that, for every $i > r$, there exists an equation of the form
\[ \langle \bm{D}_i, \bm{q}_i \rangle = \alpha^i_1 \langle \bm{D}_1, \bm{q}_1 \rangle + \ldots + \alpha^i_r \langle \bm{D}_r, \bm{q}_r \rangle .\]
We call these equations, one for every $i>r$, the decoding equations of the one-shot scheme.
\end{definition}

\begin{proposition} \label{pro:oneshotpir}
A one-shot PIR scheme, $\mathcal{Q}$, of codimension $r$ has rate
\[R_\mathcal{Q} = \frac{N-r}{N} = 1 - \frac{r}{N} .\]
\end{proposition}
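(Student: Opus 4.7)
The plan is to unwind the definitions in Definition~\ref{def: one-shot scheme} and explicitly count (i) how many responses the user downloads and (ii) how many linearly independent symbols of $\bm{W}^1$ the user can decode from those responses. The ratio of the two will give the rate.

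First I would observe that the query structure in Table~\ref{table:oneshot} prescribes exactly one query per server, hence $|\mathcal{Q}^1| = N$, and so the user receives $N$ symbols of $\mathbb{F}_q$ in total. Next, the decoding works in two stages. The $r$ unmixed responses $\langle \bm{D}_1, \bm{q}_1\rangle, \ldots, \langle \bm{D}_r, \bm{q}_r\rangle$ together with the decoding equations
\[
\langle \bm{D}_i, \bm{q}_i \rangle = \alpha_1^i \langle \bm{D}_1, \bm{q}_1\rangle + \cdots + \alpha_r^i \langle \bm{D}_r, \bm{q}_r\rangle, \qquad i = r+1, \ldots, N,
\]
provided by Property~\ref{property3}, allow the user to compute $\langle \bm{D}_i, \bm{q}_i\rangle$ for every $i > r$. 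Subtracting these from the $N-r$ received mixed responses $\langle \bm{D}_i, \bm{q}_i + \bm{a}_{i-r}\rangle$ yields the informative values $\langle \bm{D}_{r+1}, \bm{a}_1\rangle, \ldots, \langle \bm{D}_N, \bm{a}_{N-r}\rangle$. By Property~\ref{property2} these $N - r$ symbols are linearly independent, and because each $\bm{a}_j \in \mathbb{V}_1$ they are linear functions of $\bm{W}^1$ alone.

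To turn this count into the claimed rate, I would set $L$ so that the user retrieves $\bm{W}^1$ in full: each invocation of the one-shot scheme delivers $N - r$ independent $\mathbb{F}_q$-symbols of $\bm{W}^1$, so running the scheme in parallel $L/(N-r)$ times (choosing the block length $L$ to be a multiple of $N-r$, or equivalently choosing $L = N-r$ in the minimal case) gives total query count $|\mathcal{Q}^1| = L \cdot N/(N-r)$ and hence
\[
R_{\mathcal{Q}} \;=\; \frac{L}{|\mathcal{Q}^1|} \;=\; \frac{N-r}{N} \;=\; 1 - \frac{r}{N}.
\]
The only non-bookkeeping step is justifying that these $N-r$ recovered symbols are genuinely new information about $\bm{W}^1$; I expect this to be the main (though very light) obstacle, and it is handled entirely by invoking the linear independence guaranteed by Property~\ref{property2}, together with the observation that $\bm{a}_1, \ldots, \bm{a}_{N-r} \in \mathbb{V}_1$ so that no other message contributes to these responses. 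Privacy need not be re-verified here since Proposition~\ref{pro:oneshotpir} concerns only the rate, and correctness was used above through Properties~\ref{property2}--\ref{property3}.
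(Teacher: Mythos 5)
Your proof is correct and follows essentially the same route as the paper's: compute $\langle \bm{D}_{r+i}, \bm{q}_{r+i}\rangle$ from the decoding equations (Property~\ref{property3}), subtract from the mixed responses to recover $\langle \bm{D}_{r+i}, \bm{a}_i\rangle$, and count $N-r$ linearly independent useful symbols out of $N$ downloaded (Property~\ref{property2}). Your additional remark about repeating the scheme to make $N-r$ divide $L$ matches what the paper defers to Remark~\ref{rem: divisibility}.
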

\begin{proof}
For every $i>r$, we have a decoding equation
\[ \langle \bm{D}_i, \bm{q}_i \rangle = \alpha^i_1 \langle \bm{D}_1, \bm{q}_1 \rangle + \ldots + \alpha^i_r \langle \bm{D}_r, \bm{q}_r \rangle .\]
We can then write, for every $1 \leq i \leq N-r$,
\[ \langle \bm{D}_{r+i}, \bm{a}_i \rangle = \langle \bm{D}_{r+i}, \bm{q}_{r+i} + \bm{a}_i \rangle - \langle \bm{D}_i, \bm{q}_i \rangle .\] 
Thus, the user can retrieve the linearly independent $\langle \bm{D}_{r+1} , \bm{a}_1 \rangle , \ldots, \langle \bm{D}_N , \bm{a}_{N-r} \rangle$, obtaining $N-r$ useful symbols, having downloaded a total of $N$ symbols. 
\end{proof}

\begin{remark} \label{rem: divisibility}
Technically,  $N-r$ must be divisible by $L$. When this does not occur, the one-shot scheme must be repeated  $\lcm(N-r,L)$ times\footnote{We denote the least common multiple of $N-r$ and $L$ by $\lcm(N-r,L)$.}. This, however, does not change the rate of the scheme. We explain more in detail in Example \ref{exe:oneshot coded}. 
\end{remark}

The scheme in Table \ref{table:intro} is an example of a one-shot scheme with codimension $1$. We now present a more complicated example, with codimension $2$, where the data is replicated.

\begin{example}[Replicated Data with Codimension $2$] \label{exe: one-shot 4,1,2,M}
Suppose $M$ messages, $\bm{W}^1, \ldots, \bm{W}^M \in \mathbb{F}_3^2 $ are stored, replicated, onto $N=4$ servers, where at most $T=2$ of them collude, and a user is interested in retrieving $\bm{W}^1$ privately. The user can use the following one-shot scheme.

\begin{table}[H]
\centering
\begin{tabular}{c c c c}
 Server $1$ &  Server $2$ & Server $3$ & Server $4$ \\
\toprule
$\bm{q}_1$ & $\bm{q}_2$ & $\bm{q}_1 + \bm{q}_2 + \bm{e}_1$ &  $\bm{q}_1 + 2\bm{q}_2 + \bm{e}_2$   \\
\end{tabular}
\caption{Query structure for Example \ref{exe: one-shot 4,1,2,M}.}
\label{table: one-shot 4,1,2,M}
\end{table}
The queries in Table \ref{table: one-shot 4,1,2,M} satisfy the following properties:
\begin{itemize}
\item The vectors $\bm{q}_1,\bm{q}_2 \in \mathbb{F}_3^{2M}$ are chosen uniformly at random.
\item The vectors $\bm{e}_1$ and $\bm{e}_2$ are the first two vectors of the standard basis of $\mathbb{F}_3^{2M}$.
\end{itemize}

This scheme is private since for any two servers the queries are uniformly and independently distributed.

The two decoding equations of this one-shot scheme are
\begin{align}
\langle \bm{D} , \bm{q}_1 + \bm{q}_2 \rangle &= \langle \bm{D} , \bm{q}_1 \rangle + \langle \bm{D} , \bm{q}_2 \rangle \label{eq:eq1} \\
\langle \bm{D} , \bm{q}_1 + 2\bm{q}_2 \rangle &= \langle \bm{D} , \bm{q}_1 \rangle + 2\langle \bm{D} , \bm{q}_2 \rangle . \label{eq:eq2}
\end{align}

From these equations and the responses from the server, the user can retrieve $\bm{W}^1_1 = \langle \bm{D} , \bm{e}_1 \rangle$ and $\bm{W}^1_2 = \langle \bm{D} , \bm{e}_2 \rangle$.

To retreive $2$ units of information from the message the user has to download $4$. Therefore, the rate of this PIR scheme is $R=1/2$, as per Proposition \ref{pro:oneshotpir}.

\end{example}

In the following example we will consider a coded database where we must take Remark \ref{rem: divisibility} into account.

\begin{example}[Coded Data] \label{exe:oneshot coded}
Suppose $M$ messages are stored using a $(4,2)$-MDS code over $\mathbb{F}_3$ as in Table \ref{table:example}.

\begin{table}[H]
\centering
\begingroup
\renewcommand{\arraystretch}{1.5}
\begin{tabular}{ c  c  c  c}
Server $1$ & Server $2$ & Server $3$ & Server $4$ \\
\toprule
 $\bm{W}_1^1$ & $\bm{W}_2^1$ &$\bm{W}_1^1+\bm{W}_2^1$ & $\bm{W}_1^1+2 \bm{W}_2^1$ \\
 $\bm{W}_1^2$ & $\bm{W}^2_2$ & $\bm{W}_1^2 + \bm{W}^2_2$ & $\bm{W}_1^2 + 2 \bm{W}^2_2$ \\
 $\vdots$ & $\vdots$ & $\vdots$ & $\vdots$  \\
 $\bm{W}_1^M$ & $\bm{W}^M_2$ & $\bm{W}_1^M + \bm{W}^M_2$ & $\bm{W}_1^M + 2 \bm{W}^M_2$
\end{tabular}
\endgroup

\caption{Data stored in a $(4,2)$-MDS code.}
\label{table:example}
\end{table}

Suppose the user is interested in the first message and wants $2$-privacy, i.e., at most $2$ servers can collude. The following is a $(4,2,2,M)$-one-shot scheme taken from \cite{tajeddine2016private}.
\begin{table}[H]
\centering
\begin{tabular}{c c c c}
Server $1$ &  Server $2$ & Server $3$ & Server $4$ \\
\toprule
$\bm{q}_1$ & $\bm{q}_2$ & $\bm{q}_3$ & $\bm{q}_4+\bm{e}_1$  
\end{tabular}
\caption{Query structure for Example \ref{exe:oneshot coded}.}
\label{table:oneshot coded}
\end{table}
The queries in Table \ref{table:oneshot coded} satisfy the following properties:
\begin{itemize}
\item The vectors $\bm{q}_1,\bm{q}_2 \in \mathbb{F}_3^{M}$ are uniformly and independently distributed.
\item The vectors $\bm{q}_3=\bm{q}_1+\bm{q}_2$ and $\bm{q}_4=\bm{q}_1+2\bm{q}_2$.
\item The vector $\bm{e}_1$ is the first vector of the standard basis of $\mathbb{F}_3^{M}$
\end{itemize}

This scheme is private since for any two servers the queries are uniformly and independently distributed.

The decoding equation of this one-shot scheme is 
\begin{align} \label{eq:oneshotidentity}
\langle \bm{D}_4 , \bm{q}_4 \rangle = - \langle \bm{D}_1 , \bm{q}_1 \rangle +2 \langle \bm{D}_2 , \bm{q}_2 \rangle + 2 \langle \bm{D}_3 , \bm{q}_3 \rangle .
\end{align}

From this equation the user can retrieve $\bm{W}^1_1+2\bm{W}^1_2$. To retrieve $1$ unit of the message the user has to download $4$ units. Therefore, the rate of the PIR scheme is $R=1/4$, which could have also been obtained from Proposition \ref{pro:oneshotpir}.

The scheme, however, does not satisfy correctness, since the user has not retrieved the whole of $\bm{W}^1$. To do this, we must repeat it ${\lcm(N-r,L) = 2}$ times, as stated in Remark \ref{rem: divisibility}.

To get a different combination we will have to change where we introduce our mixed query. Consider the following query structure, analagous to the one in Table \ref{exe:oneshot coded}

\begin{table}[H]
\centering
\begin{tabular}{c c c c}
Server $1$ &  Server $2$ & Server $3$ & Server $4$ \\
\toprule
$\bm{q}_5$ & $\bm{q}_6$ & $\bm{q}_7 +\bm{e}_1 $ & $\bm{q}_8$ 
\end{tabular}
\caption{Second part of the query structure in Example~\ref{exe:oneshot coded}.}
\label{table:oneshot coded 2}
\end{table}

The queries in Table \ref{table:oneshot coded} satisfy the following properties:
\begin{itemize}
\item The vectors $\bm{q}_5,\bm{q}_6 \in \mathbb{F}_3^{M}$ are uniformly and independently distributed.
\item The vectors $\bm{q}_7=\bm{q}_5+\bm{q}_6$ and $\bm{q}_8=\bm{q}_5+2\bm{q}_6$.
\item The vector $\bm{e}_1$ is the first vector of the standard basis of $\mathbb{F}_3^{M}$
\end{itemize}

The decoding equation of this one-shot scheme, the same as Equation \ref{eq:oneshotidentity}, can be written as
\begin{align} \label{eq:oneshotidentityb}
\langle \bm{D}_3 , \bm{q}_3 \rangle = 2 \langle \bm{D}_1 , \bm{q}_1 \rangle - \langle \bm{D}_2 , \bm{q}_2 \rangle + 2 \langle \bm{D}_4 , \bm{q}_4 \rangle .
\end{align}

From this equation the user can retrieve $\bm{W}^1_1+\bm{W}^1_2$. Now, having both $\bm{W}^1_1+2\bm{W}^1_2$ and $\bm{W}^1_1+\bm{W}^1_2$, the user can decode $W^1$. To retrieve $2$ units of information the user had to download $8$ units. Thus, as mentioned in Remark \ref{rem: divisibility}, the rate of the one-shot scheme is still $R=1/4$.

\end{example}

In the rest of the paper we focus only on the first round of queries since, as stated in Remark \ref{rem: divisibility}, the rate of the scheme is not affected by the additional rounds needed for correctness.

\begin{remark} \label{rem:fieldoneshot}
The schemes in Examples \ref{exe: one-shot 4,1,2,M} and \ref{exe:oneshot coded} do not work over the base field $\mathbb{F}_2$. The reason for this is that the decoding equations \ref{eq:eq1}, \ref{eq:eq2}, \ref{eq:oneshotidentity}, and \ref{eq:oneshotidentityb} all have $2$ as a coefficient. Since $\mathbb{F}_2$ has characteristic $2$, the coefficient $2$ is equal to $0$.\footnote{The characteristic of a field, $F_q$, is the smallest integer $k$ such that, for every $x \in \mathbb{F}_q$, $\underbrace{x+\ldots+x}_{\text{$k$ times}} = 0.$}

In general, given a one-shot scheme, the base field must have a characteristic which is compatible with the coefficients of the decoding equations. For simplicity, we can always consider that the base field has characteristic larger than the largest coefficient in the decoding equations. 
\end{remark}

\subsection{One-Shot Schemes from the Literature}

In this section, we reinterpret some schemes from the literature as one-shot schemes. As usual, $N$ denotes the number of servers, $T$ the number of colluding servers, and $M$ the number of messages. The messages are stored as an $(N,K)$-MDS code and the user wants the first message. \mbox{}\\

\noindent{\em Replicated Data:} When the messages are replicated among the servers, i.e. $K=1$, a $T$-threshold linear secret sharing scheme \cite{CDN15} can be transformed into the following one-shot scheme, with codimension $r=T$.

\begin{table}[H]

\begin{tabular}{c  c  c  c c c}
Server $1$ & \ldots & Server $T$ & Server $T+1$ & \ldots & Server $N$ \\
\toprule
$\bm{q}_1$ &  & $\bm{q}_T$ & $\bm{q}_{T+1}+\bm{e}_1 $ & & $\bm{q}_N + \bm{e}_{N-T}$ 
\end{tabular}

\caption{Query structure for the secret sharing scheme.}
\label{tab:secret sharing oneshot}
\end{table}

The queries in Table \ref{tab:secret sharing oneshot} satisfy the following properties:
\begin{itemize}
\item The vectors $\bm{q}_1, \ldots, \bm{q}_T \in \mathbb{F}_q^{M(N-T)}$ are chosen uniformly at random.
\item The vectors $\bm{q}_{T+1}, \ldots, \bm{q}_{N-T} \in \mathbb{F}_q^{M(N-T)}$ are such that $\bm{q}_1, \ldots, \bm{q}_N$ are a $(N,T)$-MDS.
\item The vectors $\bm{e}_1, \ldots, \bm{e}_{N-T}$ are the first $N-T$ vectors of the standard basis of $\mathbb{F}_q^{M(N-T)}$.
\end{itemize}

The schemes in Examples \ref{ex: intro} and \ref{exe: one-shot 4,1,2,M} are secret sharing schemes. The rate of these schemes is given by $R = \frac{N-T}{N}$, as per Proposition \ref{pro:oneshotpir}, and is known \cite{sun2016capacitynoncol} to be asymptotically optimal when the number of messages, $M$, goes to infinity. \mbox{}\\

\noindent{\em Coded Data:} The first one-shot scheme for coded databases was presented in \cite{tajeddine2016private}. The scheme in Theorem $3$ of the journal version \cite{taje18}, in our notation, has codimension $r=\frac{NK-N+T}{K}$.

An improved PIR schemes for MDS coded data with collusions was presented in \cite{freij2016private} which, in our notation, has codimension $r=K+T-1$.

It is easy to compare the rate of these schemes via their codimension. The scheme in \cite{freij2016private} outperforms the scheme in \cite{taje18} unless in the cases where either $K=1$ or $N=K+T$. 

\subsection{Geometrical One-Shot Schemes} \label{sec:new oneshot}

In this section we will present a new family of one-shot scheme with simple constructions which we refer to as geometrical one-shot schemes. They exist for the parameters $K=2$ or $N=K+T$, and have the same codimension as the schemes in \cite{freij2016private} but with a more elementary construction, not requiring the use of the star product.

\begin{construction}
Let $M$ messages be stored on $N$ servers, using a $(N,K)$-MDS code, where at most $T$ servers collude. 

Since the data is stored as an $(N,K)$-MDS code, there exists unique coefficients $\lambda^i_j$ such that, for every $i \in [T]$, and for every $l \in [K+T,N]$,
\[\bm{D}_i = \sum_{j=T+1}^{K+T-1} \lambda^i_j(l) \bm{D}_j + \lambda^i_l(l) \bm{D}_l , \]
where $\bm{D}_i$ is the data stored in Server $i$.

Suppose that $\frac{\lambda_j^i (l_0)}{\lambda_j^i (l)} = \gamma_i^l$, for every $l \in [K+T,N]$, where $l_0 = K+T$.

We present the following one-shot scheme.

\begin{table}[H]
\centering
\resizebox{!}{0.35cm}{
\begin{tabular}{c c c c c c }
Server $1$ & \ldots & Server $K+T-1$ & Server $K+T$ & \ldots & Server $N$ \\
\toprule
$\bm{q}_1$ &  & $\bm{q}_{K+T-1}$ & $\bm{q}_{K+T}+\bm{e}_1$ & & $\bm{q}_{N}+\bm{e}_{N-K-T+1}$
\end{tabular}
}

\caption{Query structure for Scheme 1.}
\label{tab:scheme 1}
\end{table}

The queries in Table \ref{tab:scheme 1} satisfy the following properties:
\begin{itemize}
\item The vectors $\bm{q}_1, \ldots, \bm{q}_T \in \mathbb{F}_q^{LM/K}$ are chosen uniformly at random.
\item The vectors $\bm{q}_{T+1}, \ldots, \bm{q}_{N} \in \mathbb{F}_q^{LM/K}$ are such that ${\bm{q}_j =  \sum_{i=1}^T \lambda_{j}^i(l_0) \bm{q}_i}$, for every $j \in [T+1,N]$.
\item The vectors $\bm{e}_1, \ldots \bm{e}_{N-(K+T-1)}$ are the first $N-(K+T-1)$ vectors of the standard basis of $\mathbb{F}_q^{LM/K}$.
\end{itemize}

\end{construction}

\begin{proposition}
Geometrical one-shot schemes satisfy correctness and $T$-privacy and are, therefore, one-shot schemes.
\end{proposition}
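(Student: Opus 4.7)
The plan is to verify the three nontrivial properties of Definition~\ref{def: one-shot scheme} for the construction in Table~\ref{tab:scheme 1}: Property~\ref{property1} (joint uniformity of any $T$ queries) delivers $T$-privacy, while Properties~\ref{property2} and~\ref{property3} (linear independence of the informative responses, and existence of decoding equations) together deliver correctness. The domain conditions are immediate by construction.

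For privacy, each mixed query is a fixed shift by an element of $\mathbb{V}_1$ of the noise vector $\bm{q}_l$, and shifts preserve joint distributions, so it suffices to show that every $T$-subset of $\bm{q}_1,\ldots,\bm{q}_N$ is jointly uniform and independent. Since $\bm{q}_1,\ldots,\bm{q}_T$ are uniform i.i.d.\ and, for $j\ge T+1$, $\bm{q}_j = \sum_{i=1}^T \lambda_j^i(l_0)\,\bm{q}_i$, the task reduces to showing that the $N\times T$ matrix whose $j$-th row is $(\lambda_j^1(l_0),\ldots,\lambda_j^T(l_0))$ (with the first $T$ rows taken to be the standard basis, by convention) is the generator matrix of an $(N,T)$-MDS code. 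This is inherited from the $(N,K)$-MDS structure of the data code: the coefficients $\lambda_j^i(l_0)$ are the Lagrange-type coefficients that write each $\bm{D}_i$, $i\in[T]$, in terms of $K$ other servers, and the resulting coefficient matrix inherits the non-singularity of every $T$-minor.

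For Property~\ref{property2}, each response $\langle \bm{D}_l,\bm{e}_{l-(K+T-1)}\rangle$ with $l\in[K+T,N]$ reads a fixed coordinate of the $\bm{W}^1$-portion of server $l$'s storage, i.e.\ a specific MDS-codeword coordinate of the first message. After the repetition prescribed by Remark~\ref{rem: divisibility}, these $N-K-T+1$ readings give $N-K-T+1$ distinct coordinates of the MDS encoding of $\bm{W}^1$, which are linearly independent as linear functionals on $\bm{W}^1$.

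The core step is Property~\ref{property3}. For $l\in[K+T,N]$, the plan is to start from
\[
\langle \bm{D}_l,\bm{q}_l\rangle \;=\; \sum_{i=1}^T \lambda_l^i(l_0)\,\langle \bm{D}_l,\bm{q}_i\rangle,
\]
then substitute the MDS identity $\lambda_l^i(l)\,\bm{D}_l = \bm{D}_i - \sum_{j=T+1}^{K+T-1}\lambda_j^i(l)\,\bm{D}_j$ into each $\langle \bm{D}_l,\bm{q}_i\rangle$, producing a double sum over $i$ and $j$. The ratio hypothesis $\lambda_j^i(l_0)/\lambda_j^i(l)=\gamma_i^l$, independent of $j$, is exactly the identity needed so that the $(i,j)$-coefficient of the double sum factors in a way that lets the inner sum over $i$ reassemble as $\sum_{i=1}^T \lambda_j^i(l_0)\,\langle \bm{D}_j,\bm{q}_i\rangle = \langle \bm{D}_j,\bm{q}_j\rangle$. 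The outcome is an explicit linear combination of $\langle \bm{D}_1,\bm{q}_1\rangle,\ldots,\langle \bm{D}_{K+T-1},\bm{q}_{K+T-1}\rangle$, which is the required decoding equation. The main obstacle is this regrouping: checking that the coefficient genuinely factors the right way so that summation over $i$ yields $\langle \bm{D}_j,\bm{q}_j\rangle$. This is precisely where the restriction to the regimes $K=2$ or $N=K+T$ enters — in both cases either the $j$-range $[T+1,K+T-1]$ or the $l$-range $[K+T,N]$ is a singleton, so the ratio condition on $\gamma_i^l$ is either vacuous or trivially satisfied, and the collapse closes automatically.
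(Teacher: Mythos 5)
Your derivation of the decoding equation is correct and arrives at the same identity as the paper, but by a genuinely different route: you begin from the left side $\langle \bm{D}_l,\bm{q}_l\rangle$, unpack $\bm{q}_l = \sum_i \lambda_l^i(l_0)\,\bm{q}_i$, and then substitute for $\bm{D}_l$ in each resulting term using the MDS relation solved for $\bm{D}_l$, whereas the paper starts from the right side $\sum_{i=1}^T \gamma_i^l \langle \bm{D}_i, \bm{q}_i\rangle$ and substitutes the MDS relation for $\bm{D}_i$ directly, collapsing the double sum via $\gamma_i^l \lambda_j^i(l) = \lambda_j^i(l_0)$. The two routes are inverse-equivalent --- both hinge on exactly the same ratio hypothesis --- but yours exposes more clearly that $\gamma_i^l$ appears on both the $j<K+T$ terms and the $j=l$ term, which is the implicit consistency the construction needs. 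Your closing observation that the ratio condition is trivial when either the $j$-range or the $l$-range is a singleton is a nice justification, not spelled out in the paper, of why the construction is restricted to $K=2$ or $N=K+T$. You go further than the paper's proof, which verifies only the decoding equations and leaves privacy and the linear-independence property implicit; your privacy sketch (that the $N\times T$ coefficient matrix built from the $\lambda_j^i(l_0)$ must itself be the generator of an $(N,T)$-MDS code) identifies the right structural requirement, though the claim that this MDS property is ``inherited'' from the data code should really be treated as an additional hypothesis to check rather than an automatic consequence; it holds in the worked example and for standard Reed--Solomon instantiations, but is not self-evident for arbitrary MDS codes, so it deserves either a short verification or an explicit assumption.
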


\begin{proof}

For every $l \in [K+T,N]$,

\begin{align*}
    &\sum_{i=1}^T \gamma_i^l \langle \bm{D}_i , \bm{q}_i \rangle = \sum_{i=1}^T \gamma_i^j \langle \sum_{j=T+1}^{K+T-1} \lambda^i_j(l) \bm{D}_j + \lambda^i_l(l) \bm{D}_l , \bm{q}_i \rangle \\
    &= \sum_{i=1}^T \left( \sum_{j=T+1}^{K+T-1} \gamma_i^l \lambda_j^i (l) \langle \bm{D}_j , \bm{q}_i + \gamma_i^l \lambda_l^i (l) \langle \bm{D}_l , \bm{q}_i \rangle \right) \\
    &= \sum_{j=T+1}^{K+T-1} \langle \bm{D}_j , \sum_{i=1}^T \gamma_i^l \lambda_j^i (l) \bm{q}_i \rangle + \langle \bm{D}_l , \sum_{i=1}^T \gamma_i^l \lambda_l^i (l) \bm{q}_i \rangle \\
    &= \sum_{j=T+1}^{K+T-1} \langle \bm{D}_j , \bm{q}_j \rangle + \langle \bm{D}_l , \bm{q}_l \rangle 
\end{align*}

So that we get the decoding equations

\begin{align*}
   \langle \bm{D}_l , \bm{q}_l \rangle  = \sum_{i=1}^T \gamma_i^l \langle \bm{D}_i , \bm{q}_i \rangle - \sum_{j=T+1}^{K+T-1} \langle \bm{D}_j , \bm{q}_j \rangle
\end{align*}
for every $l \in [K+T,N]$.
\end{proof}

In the next example we construct a geometrical one-shot scheme for the setting of Example \ref{exe:oneshot coded}.

\begin{example} \label{exe:geometric oneshot coded}

Consider the setting of Example \ref{exe:oneshot coded}. Via the geometrical one-shot scheme described in this subsection we obtain the following one-shot scheme.

\begin{table}[H]
\centering
\begin{tabular}{c c c c}
Server $1$ &  Server $2$ & Server $3$ & Server $4$ \\
\toprule
$\bm{q}_1$ & $\bm{q}_2$ & $\bm{q}_3$ & $\bm{q}_4+\bm{e}_1$  
\end{tabular}
\caption{Query structure for Example \ref{exe:geometric oneshot coded}.}
\label{table:geometric oneshot coded}
\end{table}
The queries in Table \ref{table:geometric oneshot coded} satisfy the following properties:
\begin{itemize}
\item The vectors $\bm{q}_1,\bm{q}_2 \in \mathbb{F}_3^{M}$ are uniformly and independently distributed.
\item The vectors $\bm{q}_3= 2 \bm{q}_1 - \bm{q}_2$ and $\bm{q}_4=- \bm{q}_1+ \bm{q}_2$.
\item The vector $\bm{e}_1$ is the first vector of the standard basis of $\mathbb{F}_3^{M}$
\end{itemize}

The decoding equation of this one-shot scheme is 
\begin{align} 
\langle \bm{D}_4 , \bm{q}_4 \rangle =  \langle \bm{D}_1 , \bm{q}_1 \rangle + \langle \bm{D}_2 , \bm{q}_2 \rangle - \langle \bm{D}_3 , \bm{q}_3 \rangle .
\end{align}

\end{example}

\section{The Refinement Lemma} \label{sec:refine}

In this section we present the first part of our refine and lift operation. The refinement lemma shows how to improve the rate of a one-shot scheme when the number of messages, $M$, is equal to $2$.

One of the ideas behind the refinement lemma is to query the messages individually. For this we will need to consider the following linear subspaces.

\begin{definition}
We denote by 
\begin{equation*}
\resizebox{.95\hsize}{!}{$\mathbb{V}_j = \{ \bm{b} \in \mathbb{F}_q^{ML/K}  : \text{$i < (j-1)L/K+1$ or $i>jL/K$} \Rightarrow \bm{b}_i = 0 \}$,}
\end{equation*}
the subspace of queries which only query the $j$-th message.
\end{definition}

\begin{example}[Refined One-Time Pad Scheme] \label{exe: intro refinement}
Suppose $M=2$ two-bit messages, $\bm{W}^1, \bm{W}^2 \in \mathbb{F}_2^2$, are stored, replicated, onto two non colluding servers and a user is interested in retrieving $\bm{W}^1$ privately.

This setting could be solved by applying the one-shot scheme in Example \ref{ex: intro} with a PIR rate of $r = 1/2$.

Consider, however, the following scheme.

\begin{table}[H]
\centering
\begin{tabular}{c c }
Server $1$ &  Server $2$ \\
\toprule
$\bm{a}_1$ & $\bm{b_1}+\bm{a}_2$   \\
$\bm{b}_1$ & 
\end{tabular}
\caption{Query structure for Example \ref{exe: intro refinement}.}
\label{table:intro refinement}
\end{table}

The queries in Table \ref{table:intro refinement} satisfy the following properties:
\begin{itemize}
\item The vector $\bm{b}_1 \in \mathbb{V}_2$ is chosen uniformly at random and such that $\langle \bm{D} , \bm{b} \rangle$ is linearly independent, which in this case means that $\bm{b} \neq 0$.
\item The vectors $\bm{a}_1, \bm{a}_2 \in \mathbb{V}_1$ are chosen uniformly at random and such that $\langle \bm{D} , \bm{a}_1 \rangle$ and $\langle \bm{D} , \bm{a}_2 \rangle$ are linearly independent, which in this case means that $\bm{a}_1$ and $\bm{a}_2$ are linearly independent.
\end{itemize}

The scheme is private since, from any of the servers' point, the $\bm{a}$'s and $\bm{b}$'s are statistically indistinguishable.

To decode $\bm{W}^1$ the user uses the following decoding equation
\begin{align} \label{eq: refinement intro decoding}
\langle \bm{D} , \bm{b}_1 \rangle + \langle \bm{D} , \bm{b}_1+\bm{a}_2 \rangle = \langle \bm{D} , \bm{a}_2 \rangle
\end{align}
It is no coincidence that Equation \ref{eq: refinement intro decoding} has the same form as Equation \ref{equ:intro decoding}. Indeed, the role of $\bm{b}_1$ is analogous to that of $\bm{q}$ in Example \ref{exe: intro refinement}.

Having, the linearly indepenedent, $\langle \bm{D} , \bm{a}_1 \rangle$ and $\langle \bm{D} , \bm{a}_2 \rangle$, the user is able to decode $\bm{W}^1$. By downloading a total of $3$ bits, the user is able to privately retrieve the two bits of $\bm{W}^1$. Thus, the PIR rate of this scheme is $R=2/3$. This rate actually achieves the capacity for these parameters, given in Equation~\ref{eq:jafarcapacity}.
\end{example}

\begin{lemma}[The Refinement Lemma] \label{lem:ref}
Let $\mathcal{Q}$ be a one-shot scheme of co-dimension $r$, with rate $\frac{N-r}{N}$. Then, there exists  an $(N,K,T,2)$-PIR scheme, $\mathcal{Q'}$, with rate $R_{\mathcal{Q'}} = \frac{N}{N+r} > \frac{N-r}{N}$.
\end{lemma}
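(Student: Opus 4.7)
The plan is to generalize the construction in Example~\ref{exe: intro refinement}, in which the noise query of the underlying one-shot scheme is ``redirected'' into $\mathbb{V}_2$ so that its response carries useful information about $\bm{W}^2$, while extra direct queries in $\mathbb{V}_1$ are added on the first $r$ servers to pick up additional pieces of $\bm{W}^1$. Concretely, given the one-shot scheme $\mathcal{Q}$ with noise queries $\bm{q}_1,\ldots,\bm{q}_r$ and derived queries $\bm{q}_{r+1},\ldots,\bm{q}_N$, I would first draw $\bm{b}_1,\ldots,\bm{b}_r \in \mathbb{V}_2$ with the same joint distribution as $\bm{q}_1,\ldots,\bm{q}_r$ but confined to $\mathbb{V}_2$, and then define $\bm{b}_{r+1},\ldots,\bm{b}_N \in \mathbb{V}_2$ by the same linear relations that $\mathcal{Q}$ uses to produce $\bm{q}_{r+1},\ldots,\bm{q}_N$. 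Second, I would choose $\bm{a}_1,\ldots,\bm{a}_N \in \mathbb{V}_1$, uniformly at random subject to the responses $\langle \bm{D}_i,\bm{a}_i\rangle$ being linearly independent (a mild strengthening of Property~\ref{property2}). The refined scheme $\mathcal{Q}'$ then sends to each server $i$ with $i \le r$ the unordered set $\{\bm{a}_i,\bm{b}_i\}$, and to each server $i$ with $r < i \le N$ the single mixed query $\bm{b}_i + \bm{a}_i$. For retrieving $\bm{W}^2$, the construction swaps the roles of $\mathbb{V}_1$ and $\mathbb{V}_2$.

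For correctness, because the $\bm{b}$'s satisfy the same linear relations as the $\bm{q}$'s in $\mathcal{Q}$ and the MDS structure of the data is unchanged, the decoding equations of $\mathcal{Q}$ apply verbatim to give
\[
\langle \bm{D}_i,\bm{b}_i\rangle \;=\; \sum_{j=1}^{r}\alpha^i_j\,\langle \bm{D}_j,\bm{b}_j\rangle \qquad (i>r),
\]
computed from the responses $\langle \bm{D}_j,\bm{b}_j\rangle$ with $j\le r$. Subtracting these from the mixed responses recovers $\langle \bm{D}_i,\bm{a}_i\rangle = \langle \bm{D}_i,\bm{b}_i+\bm{a}_i\rangle - \langle \bm{D}_i,\bm{b}_i\rangle$ for $i>r$; together with the directly observed $\langle \bm{D}_i,\bm{a}_i\rangle$ for $i\le r$, this yields $N$ linearly independent responses about $\bm{W}^1$. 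The total download is $2r + (N-r) = N+r$ symbols, so $R_{\mathcal{Q}'} = N/(N+r)$, which strictly exceeds $(N-r)/N$ whenever $r>0$ since $N^2 > (N-r)(N+r) = N^2 - r^2$.

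For privacy, I would show that the multiset of queries received by any $T$ colluding servers is identically distributed under $m=1$ and $m=2$. For a colluding server $i \le r$, the unordered pair $\{\bm{a}_i,\bm{b}_i\}$ contains one element uniform in $\mathbb{V}_1$ and one uniform in $\mathbb{V}_2$; swapping $m$ merely relabels which element is the informative one, leaving the pair identically distributed. For $i > r$, the independent uniformity of $\bm{a}_i$ in $\mathbb{V}_1$ together with Property~\ref{property1} of $\mathcal{Q}$ applied inside $\mathbb{V}_2$ forces the mixed query $\bm{b}_i + \bm{a}_i$ to be uniform in $\mathbb{V}_1 \oplus \mathbb{V}_2$, again symmetric under $\mathbb{V}_1 \leftrightarrow \mathbb{V}_2$. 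Combining these observations across any $T$ chosen servers and invoking Property~\ref{property1} of $\mathcal{Q}$ on the $\bm{b}$-components gives full joint uniformity that is invariant under the subspace swap, which yields $T$-privacy.

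The main obstacle I anticipate is the $T$-privacy argument when $T > r$, where the colluding set mixes direct-query servers (who see both an $\bm{a}$ and a $\bm{b}$) with mixed-query servers (who see a sum). One must carefully use Property~\ref{property1} of $\mathcal{Q}$ inside the restricted subspace $\mathbb{V}_2$ to guarantee joint uniformity of the $\bm{b}$-components, and then argue that the independent randomness of the $\bm{a}$'s lifts this uniformity to the full query distribution in a way that is manifestly symmetric in $\mathbb{V}_1$ and $\mathbb{V}_2$. The unordered-set convention for $Q^m_i$ inherited from the PIR definition is crucial here: without it, a server with $i\le r$ could distinguish $m=1$ from $m=2$ simply by checking which of its two queries lies in $\mathbb{V}_1$.
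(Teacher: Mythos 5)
Your construction and argument match the paper's proof of the Refinement Lemma: the same query layout (two queries on each of the first $r$ servers, one mixed query on each of the last $N-r$), the same reuse of the one-shot decoding equations on the $\mathbb{V}_2$-components to recover $\langle \bm{D}_i,\bm{a}_i\rangle$ for $i>r$, and the same download count $N+r$ for $N$ useful symbols. The paper additionally imposes that $\langle\bm{D}_1,\bm{b}_1\rangle,\ldots,\langle\bm{D}_T,\bm{b}_T\rangle$ be linearly independent (a condition needed later when lifting), but this does not alter the approach, and your more explicit treatment of $T$-privacy via the $\mathbb{V}_1\leftrightarrow\mathbb{V}_2$ symmetry and the unordered-set convention is a faithful elaboration of the paper's terse ``the $\bm{a}$'s and $\bm{b}$'s are indistinguishable.''
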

\begin{proof}
We construct $\mathcal{Q}'$ in the following way.
\begin{table}[H]
\centering
\begin{tabular}{c c c c c c}
Server $1$ & $\cdots$ & Server $r$ & Server $r+1$ & $\cdots$ & Server $N$ \\
\toprule
 $\bm{a}_1$ & $\cdots$ & $\bm{a}_r$ & $\bm{a}_{r+1}+\bm{b}_{r+1} $ & $\cdots$ & $\bm{a}_N+\bm{b}_N$  \\
 $\bm{b}_1$ & $\cdots$ & $\bm{b}_r$ & & &
\end{tabular}
\caption{Query structure for the refined scheme $\mathcal{Q}'$.}
\label{table:refinementlemma}
\end{table}
The queries in Table \ref{table:refinementlemma} satisfy the following properties:
\begin{itemize}
\item The vectors $\bm{b}_1, \ldots, \bm{b}_N \in \mathbb{V}_2$ are chosen with the same distribution as  $\bm{q}_1, \ldots, \bm{q}_N \in \mathbb{F}_q^{ML/K}$ from the one-shot scheme $\mathcal{Q}$ but such that  $\langle \bm{D}_1 , \bm{b}_1 \rangle , \ldots , \langle \bm{D}_T , \bm{b}_T \rangle$ are linearly independent.

\item The vectors $\bm{a}_1, \ldots, \bm{a}_N \in \mathbb{V}_1$ are chosen uniformly at random but such that $\langle \bm{D}_1 , \bm{a}_1 \rangle , \ldots , \langle \bm{D}_N , \bm{a}_N \rangle$ are linearly independent.
\end{itemize}

The scheme is private since for any $T$ servers the $\bm{a}$'s and the $\bm{b}$'s are indistinguishable.

The decoding equations of the one-shot scheme are valid also for the $\bm{b}$'s. To retrieve $N$ units of information the user had to download $N+r$. Thus, the rate of the scheme is
\[R_{\mathcal{Q'}} = \frac{N}{N+r} > \frac{N-r}{N} = R_{\mathcal{Q}} .\]

\end{proof}

\begin{remark}
Since the refined scheme inherits the decoding equations from the one-shot scheme, they will both share the same restrictions on the base field, as discussed in Remark~\ref{rem:fieldoneshot}.

The larger the characteristic of the base field is, the larger the probability that $\langle \bm{D}_1 , \bm{a}_1 \rangle , \ldots , \langle \bm{D}_N , \bm{a}_N \rangle$ will be linearly independent if the vectors $\bm{a}_1, \ldots, \bm{a}_N \in \mathbb{V}_1$ are chosen uniformly at random.
\end{remark}

\begin{example}[Refined Scheme on Replicated Data] \label{exe: refined one-shot 4,1,2,M}
Refining the scheme in Example \ref{exe: one-shot 4,1,2,M} we get the following scheme.

\begin{table}[H]
\centering
\begin{tabular}{c c c c}
 Server $1$ &  Server $2$ & Server $3$ & Server $4$ \\
\toprule
$\bm{a}_1$ & $\bm{a}_2$ & $\bm{b}_1 + \bm{b}_2 + \bm{a}_3$ &  $\bm{b}_1 + 2\bm{b}_2 + \bm{a}_4$   \\
$\bm{b}_1$ & $\bm{b}_2$ & &
\end{tabular}
\caption{Query structure for Example \ref{exe: refined one-shot 4,1,2,M}.}
\label{table: exe: refined one-shot 4,1,2,M}
\end{table}
The queries in Table \ref{table: exe: refined one-shot 4,1,2,M} satisfy the following properties:
\begin{itemize}
\item The vectors $\bm{b}_1, \bm{b}_2 \in \mathbb{V}_2$ are chosen uniformly at random and such that $\langle \bm{D} , \bm{b}_1 \rangle$ and $\langle \bm{D} , \bm{b}_2 \rangle$ are linearly independent, which in this case means that $\bm{b}_1$ and $\bm{b}_2$ are linearly independent.

\item The vectors $\bm{a}_1, \ldots, \bm{a}_4 \in \mathbb{V}_1$ are chosen uniformly at random and such that $\langle \bm{D} , \bm{a}_1 \rangle, \ldots, \langle \bm{D} , \bm{a}_4 \rangle$ are linearly independent, which in this case means that $\bm{a}_1, \ldots, \bm{a}_4$ are linearly independent.
\end{itemize}

This scheme is private since for any two servers the $b$'s and $a$'s are indistinguishable.

The two decoding equations of this one-shot scheme are
\begin{align}
\langle \bm{D} , \bm{b}_1 + \bm{b}_2 \rangle &= \langle \bm{D} , \bm{b}_1 \rangle + \langle \bm{D} , \bm{b}_2 \rangle \label{eq:eq1 refine} \\
\langle \bm{D} , \bm{b}_1 + 2\bm{b}_2 \rangle &= \langle \bm{D} , \bm{b}_1 \rangle + 2\langle \bm{D} , \bm{b}_2 \rangle . \label{eq:eq2 refine}
\end{align}
These equations are inherited from the one-shot scheme, Equations \ref{eq:eq1 refine} and \ref{eq:eq2 refine} from Example \ref{exe: one-shot 4,1,2,M}.

From these equations, the user can retrieve $\langle \bm{D} , \bm{a}_3 \rangle$ and $\langle \bm{D} , \bm{a}_3 \rangle$, which together with $\langle \bm{D} , \bm{a}_1 \rangle$ and $\langle \bm{D} , \bm{a}_2 \rangle$ gives a total of $4$ units of information retrieved from a total of $6$ downloaded. Thus, the PIR rate of this scheme is $R = 2/3$, as per Lemma \ref{lem:ref}.

\end{example}

\begin{example}[Refined Scheme on Coded Data] \label{exe: refined oneshot coded}

Refining the scheme in Example \ref{exe:oneshot coded} we get the following scheme.

\begin{table}[H]
\centering
\begin{tabular}{c c c c}
Server $1$ &  Server $2$ & Server $3$ & Server $4$ \\
\toprule
$\bm{a}_1$ & $\bm{a}_2$ & $\bm{a}_3$ & $\bm{a}_4+\bm{b}_4$  \\
$\bm{b}_1$ & $\bm{b}_2$ & $\bm{b}_3$ &
\end{tabular}
\caption{Query structure for Example \ref{exe: refined oneshot coded}.}
\label{table:exe: refined oneshot coded}
\end{table}
The queries in Table \ref{table:exe: refined oneshot coded} satisfy the following properties:
\begin{itemize}
\item The vectors $\bm{b}_1, \bm{b}_2 \in \mathbb{V}_2$ are chosen uniformly at random and such that $\langle \bm{D}_1 , \bm{b}_1 \rangle$ and $\langle \bm{D}_2 , \bm{b}_2 \rangle$ are linearly independent.

\item The vectors $\bm{b}_3=\bm{b}_1+\bm{b}_2$ and $\bm{b}_4=\bm{b}_1+2\bm{b}_2$.

\item The vectors $\bm{a}_1, \ldots, \bm{a}_4 \in \mathbb{V}_1$ are chosen uniformly at random and such that $\langle \bm{D}_1 , \bm{a}_1 \rangle, \ldots, \langle \bm{D}_4 , \bm{a}_4 \rangle$ are linearly independent.
\end{itemize}

This scheme is private since for any two servers the $\bm{b}$'s and $\bm{a}$'s are indistinguishable.

The decoding equation of this one-shot scheme is
\begin{align} \label{eq: decoding refined oneshot coded}
\langle \bm{D}_4 , \bm{b}_4 \rangle = - \langle \bm{D}_1 , \bm{b}_1 \rangle +2 \langle \bm{D}_2 , \bm{b}_2 \rangle + 2 \langle \bm{D}_3 , \bm{b}_3 \rangle .
\end{align}
This equation is inherited from the one-shot scheme, Equation~\ref{eq:oneshotidentity} in Example \ref{exe:oneshot coded}.

From this equation, the user can retrieve $\langle \bm{D}_4 , \bm{a}_4 \rangle$, which together with $\langle \bm{D}_1 , \bm{a}_1 \rangle$, $\langle \bm{D}_2 , \bm{a}_2 \rangle$, and $\langle \bm{D}_3 , \bm{a}_3 \rangle$ gives a total of $4$ units of information retrieved from a total of $7$ downloaded. Thus, the PIR rate of this scheme is $R = 4/7$, as per Lemma \ref{lem:ref}.

\end{example}

\section{The Lifting Theorem} \label{sec:lift}

In this section, we present our main result in Theorem \ref{teo:lift}. We show how to extend, by means of a lifting operation, the refined scheme on two messages to any number of messages.

Informally, the lifting operation consists of two steps: a symmetrization step, and a way of dealing with ``leftover'' queries that result from the symmetrization. 

We also introduce a symbolic matrix representation for PIR schemes which simplifies our analysis. 

\subsection{An Example of the Lifting Operation} \label{sec:exlift}

\begin{definition}
A $k$-query is a sum of $k$ queries, each belonging to a different $\mathbb{V}_j$, $j \in [M]$.
\end{definition}

So, for example, if $\bm{a}\in \mathbb{V}_1$, $\bm{b}\in \mathbb{V}_2$, and $\bm{c}\in \mathbb{V}_3$, then $\bm{a}$ is a $1$-query, $\bm{a}+\bm{b}$ is a $2$-query, and $\bm{a}+\bm{b}+\bm{c}$ is a $3$-query.

Consider the scheme in Example \ref{exe: refined oneshot coded}. We represent the structure of this scheme, given in Table \ref{table:exe: refined oneshot coded},  by means of the following matrix:
\begin{align} \label{eq:s2}
 S_2 = \begin{pmatrix}
1 & 1 & 1 & 2
\end{pmatrix}
.\end{align}

Each column of $S_2$ corresponds to a server. A $1$ in column~ $i$ represents sending all possible combinations of $1$-queries of every message to server $i$, and a $2$ represents sending all combinations of $2$-queries of every message to server $i$. We call this matrix the \emph{symbolic matrix} of the scheme.

The co-dimension $r=3$ tells us that for every $r=3$ ones there is $N-r=1$ twos in the symbolic matrix.

Given the interpretation above, the symbolic matrix $S_2$ can be readily applied to obtain the structure of a PIR scheme for any number of messages $M$. For $M=3$, the structure is as follows.
\begin{table}[H]
\centering
\begin{tabular}{c  c  c  c}
server $1$ & server $2$ & server $3$ & server $4$ \\
\toprule
$\bm{a}_1$ & $\bm{a}_2$ & $\bm{a}_3$ & $\bm{a}_4 + \bm{b}_4 $  \\
$\bm{b}_1$ & $\bm{b}_2$ & $\bm{b}_3$ &  $\bm{a}_5 + \bm{c}_4 $\\
$\bm{c}_1$ & $\bm{c}_2$ & $\bm{c}_3$ &  $\bm{b}_5 + \bm{c}_5$\\
\end{tabular}
\caption{Query structure for $M=3$ in Example \ref{table:exe: refined oneshot coded} as implied by the symbolic matrix in Equation \ref{eq:s2}.}
\label{table:partiallift}
\end{table}

The queries in Table \ref{table:partiallift} satisfy the following properties:

\begin{itemize}
\item The $\bm{a}_i \in \mathbb{V}_1$, $\bm{b}_i \in \mathbb{V}_2$, and $\bm{c}_i \in \mathbb{V}_3$.
\item The $\bm{a}$'s and $\bm{b}$'s are chosen as in Example \ref{exe: refined oneshot coded}.
\item The $\bm{c}$'s are chosen analogously to the $\bm{b}$'s.
\item The extra ``leftover" term $\bm{b}_5+\bm{c}_5$ is chosen uniformly at random but different than zero.
\end{itemize}

The scheme in Table \ref{table:partiallift} has rate $5/12$. In this scheme, the role of $\bm{b}_5+\bm{c}_5$ is to achieve privacy and does not contribute to the decoding process. In this sense, it can be seen as a ``leftover'' query of the symmetrization. By repeating the scheme $r=3$ times, each one shifted to the left, so that the ``leftover'' queries appear in different servers, we can apply the same idea in the one-shot scheme to the ``leftover" queries, as shown in Table~\ref{table:lifted}. Thus, we   improve the rate from $5/12$ to $16/37$.

\begin{table}[H]
\centering
\begin{tabular}{c  c  c  c}
server $1$ & server $2$ & server $3$ & server $4$ \\
\toprule
$\bm{a}_1$ & $\bm{a}_2$ & $\bm{a}_3$ & $\bm{a}_4 + \bm{b}_4 $  \\
$\bm{b}_1$ & $\bm{b}_2$ & $\bm{b}_3$ &  $\bm{a}_5 + \bm{c}_4 $\\
$\bm{c}_1$ & $\bm{c}_2$ & $\bm{c}_3$ &  $\bm{b}_5 + \bm{c}_5$\\ \hdashline[2pt/1pt]
$\bm{a}_7$ & $\bm{a}_8$ & $\bm{a}_9 + \bm{b}_9 $  & $\bm{a}_6$  \\
$\bm{b}_7$ & $\bm{b}_8$ & $\bm{a}_{10} + \bm{c}_9 $  &  $\bm{b}_6$ \\
$\bm{c}_7$ & $\bm{c}_8$ & $\bm{b}_{10} + \bm{c}_{10}$  & $\bm{c}_6$ \\
\hdashline[2pt/1pt]
$\bm{a}_{13}$ & $\bm{a}_{14} + \bm{b}_{14} $ & $\bm{a}_{11}$ & $\bm{a}_{12}$   \\
$\bm{b}_{13}$ & $\bm{a}_{15} + \bm{c}_{14} $ & $\bm{b}_{11}$ & $\bm{b}_{12}$ \\
$\bm{c}_{13}$ & $\bm{b}_{15} + \bm{c}_{15}$ & $\bm{c}_{11}$ & $\bm{c}_{12}$ \\
\hdashline[2pt/1pt]
$\bm{a}_{16}+\bm{b}_{16}+\bm{c}_{16}$ &  &  &   \\
\end{tabular}
\caption{Query structure for lifted version of Table \ref{table:partiallift}.}
\label{table:lifted}
\end{table}

The queries in Table \ref{table:lifted} satisfy the following properties:

\begin{itemize}
\item The scheme is separated into four rounds.
\item In each of the first three rounds the queries behave as in Table \ref{table:partiallift}, but shifted to the left so that the ``leftover'' queries appear in different servers.
\item But now, $\bm{b}_{16}+\bm{c}_{16}$, $\bm{b}_{15}+\bm{c}_{15}$, $\bm{b}_{10}+\bm{c}_{10}$, and $\bm{b_{5}}+\bm{c_{5}}$ are chosen analogously to  $\bm{b}$'s in Example \ref{exe: refined oneshot coded}.
\end{itemize}

More precisely, the $2$-queries are chosen as follows.

\begin{itemize}
    \item The vectors $\bm{b}_{16}+\bm{c}_{16}, \bm{b}_{15}+\bm{c}_{15} \in \mathbb{V}_2 + \mathbb{V}_3$ are chosen uniformly at random and such that $\langle \bm{D}_1 , \bm{b}_{16}+\bm{c}_{16} \rangle$ and $\langle \bm{D}_2 , \bm{b}_{15}+\bm{c}_{15} \rangle$ are linearly independent.

\item The vectors $\bm{b}_{10}+\bm{c}_{10}= (\bm{b}_{16}+\bm{c}_{16}) + (\bm{b}_{15}+\bm{c}_{15})$ and $\bm{b_{5}}+\bm{c_{5}} = (\bm{b}_{16}+\bm{c}_{16} )+ 2 (\bm{b}_{15}+\bm{c}_{15})$.
\end{itemize}

In this way, $\langle \bm{D}_1, \bm{a}_{16} \rangle$ can be retrieved using the same decoding equation as Equation \ref{eq: decoding refined oneshot coded}.
\begin{multline*}\
\langle \bm{D}_1, \bm{b}_{16}+\bm{c}_{16} \rangle = 2 \langle \bm{D}_2, \bm{b}_{15}+\bm{c}_{15} \rangle  + 2 \langle \bm{D}_3, \bm{b}_{10}+\bm{c}_{10} \rangle \\ - \langle \bm{D}_1, \bm{b}_{5}+\bm{c}_{5} \rangle 
\end{multline*}

This scheme can be represented by the following matrix\footnote{We omit zeros in our symbolic matrices. }.
\[ S_3 = \begin{pmatrix}
1 & 1 & 1 & 2 \\
1 & 1 & 2 & 1 \\
1 & 2 & 1 & 1 \\
3 & & &
\end{pmatrix} \]

The scheme for $M=3$ messages was constructed recursively using the one for $2$ messages. It is this recursive operation that we call lifting. The main idea behind the lifting operation is that $r=3$ entries with value $k$ generate $N-r=1$ entry with value $k+1$ in the symbolic matrix.

Lifting $S_3$ to $S_4$ follows the same procedure: repeat $S_3$ $r=3$ times, each one shifted to the left, to produce $N-r=1$ $4$-query. As a result, we obtain the following symbolic matrix.

\[ S_4 = \begin{pmatrix}
1 & 1 & 1 & 2 \\
1 & 1 & 2 & 1 \\
1 & 2 & 1 & 1 \\
3 & & & \\
1 & 1 & 2 & 1 \\
1 & 2 & 1 & 1 \\
2 & 1 & 1 & 1 \\
 & & & 3 \\
1& 2 & 1 & 1 \\
2 & 1& 1 & 1 \\
1 & 1 & 1 & 2 \\
 & & 3 & \\
  & 4 &  & 
\end{pmatrix} \]

The queries are to be chosen analogously to the previous examples which we describe rigorously in the next subsection.

\subsection{The Symbolic Matrix}

In this section we define the symbolic matrix precisely. To do this, we will give a series of preliminary definitions. 

In what follows, we denote by $\mathbb{N}$ the set of non-negative integers, and by $\mathcal{M} (\mathbb{N})$ the set of matrices with entries in $\mathbb{N}$. If $S \in \mathcal{M} (\mathbb{N})$, we sometimes denote $S_{ij}$ by $S[i,j]$.

\begin{definition}
Let $S \in \mathcal{M} (\mathbb{N})$ be a matrix with $N$ columns. The shift operation is defined as the matrix $\sigma (S)$ such that 
\[ \sigma (S)[i,j] = \left\{\begin{array}{ll}
S[i,j+1] & \text{if $1\leq j \leq n-1$}, \\ 
S[i,1] & \text{if $j=n$}.
\end{array}\right. \]
\end{definition}

\begin{example}
If $S = \begin{pmatrix}
1 & 2 & 3\\ 
4 & 5 & 6
\end{pmatrix}$, then, $\sigma(S)=\begin{pmatrix}
2 & 3 & 1\\ 
5 & 6 & 4
\end{pmatrix}$.
\end{example}

\begin{definition}
Let $S \in \mathcal{M} (\mathbb{N})$ be a matrix. The set of entries in $S$ with value $k$ is denoted by $[k,S] = \{(i,j) : S[i,j] = k \}$.
\end{definition}

\begin{example}
If $S = \begin{pmatrix}
1 & 2 & 2 & 1 & 3 \\ 
4 & 2 & 3 & 3 & 2 
\end{pmatrix}$, then, $\# [2,S] = 4$, where ${[2,S] = \{(1,2), (1,3), (2,2), (2,5)\}}$.
\end{example}

\begin{definition}
We denote by $\prec$ the order on the set $\mathbb{N}^2$ given by $(i,j) \prec (i',j')$ if either $i<i'$ or $i=i'$ and $j<j'$.\footnote{This is known in the literature as the lexicographical order.}
\end{definition}

\begin{example}
If $x=(1,2)$, $y=(1,3)$, and $z=(2,1)$, then $x \prec y \prec z$.
\end{example}

\begin{definition}
Let $k \in \mathbb{N}$. We denote by $\tau_k : \mathbb{N}^2 \rightarrow  \mathbb{N}^2$ the function such that 
\[ \tau_k (i,j) = \left\{\begin{array}{ll}
(i+k,j-1) & \text{if $2\leq j \leq n-1$}, \\ 
(i+k,N) & \text{if $j=1$}.
\end{array}\right. \]
\end{definition}

\begin{example}
It holds that $\tau_2 (2,1) = (4,3)$.
\end{example}

\begin{definition}
Let $B \subseteq \mathbb{N}^2$. We denote by $\pi: 2^{\mathbb{N}^2} \rightarrow 2^{\mathbb{N}}$ the function such that $\pi (B) = \{ j \in \mathbb{N} : (i,j) \in B \}$.
\end{definition}

\begin{example}
If $B = \{ (1,3) , (2,1) \}$, then, $\pi (B) = \{1,3 \}$.
\end{example}

We are ready to define the symbolic matrix.

\begin{definition}[Symbolic Matrices] \label{def: symbolic matrix}
Let $N,M,r \in \mathbb{N}$. The symbolic matrix $S(N,M,r)$, which we will denote simply by $S_M$, is defined recursively as follows. 
\begin{align} 
S_2 &= (\overbrace{1,\ldots,1}^{r}, \overbrace{2,\ldots,2}^{N-r}) \label{eq:S_2} \\
S_{M+1} &= \lift(S_M) \label{eq:symbolicmatrix}
\end{align}
The function $\lift : \mathcal{M}(\mathbb{N}) \rightarrow \mathcal{M}(\mathbb{N})$ is given by

\[ \lift(S_M) = \begin{pmatrix}
S_M \\
\sigma (S_M) \\
\vdots \\
\sigma^{r-1} (S_M)\\
A
\end{pmatrix}, \]
where the matrix $A$ is constructed as follows.

Let $B = [M,S_M] = \{ b_1, \ldots, b_{ \# [M,S_M]} \} $, where $b_i \prec b_j$ if $i < j$. 

For every $i \in \mathbb{N}$, such that $1 \leq i \leq \# [M,S_M]$, we define the set $B_i = \{ b_i , \tau_l (b_i), \ldots, \tau_l^{r-1} (b_i) \}$, where $l$ is the number of rows in $S_M$.

Then, the matrix $A$ has $\# [M,S_M]$ rows and $N$ columns and is given by 
\[ A_{ij} = \left\{\begin{array}{ll}
0 & \text{if $j \in \pi (B_i)$}, \\ 
M+1 & \text{if $j \notin \pi (B_i)$}.
\end{array}\right. \]

\end{definition}

\begin{example}[Symbolic Matrix for Coded Data]
Consider the construction in Section \ref{sec:exlift}. In this case $N=4$ and $r=3$. The construction of $S_4$ in terms of $S_3$ takes the following form.

\begin{center}
\begin{tikzpicture}[decoration=brace]
   $S_4 =$ \matrix (m) [matrix of math nodes,left delimiter=(,right delimiter={)},xshift=4em] {
1 & 1 & 1 & 2 \\
1 & 1 & 2 & 1 \\
1 & 2 & 1 & 1 \\
3 & & & \\
1 & 1 & 2 & 1 \\
1 & 2 & 1 & 1 \\
2 & 1 & 1 & 1 \\
 & & & 3 \\
1& 2 & 1 & 1 \\
2 & 1& 1 & 1 \\
1 & 1 & 1 & 2 \\
 & & 3 & \\
  & 4 &  & \\
    };
    \draw[decorate,transform canvas={xshift=6.5em, yshift=0.4em},thick] (m-1-1.north west) -- node[right=2pt] {$S_3$} (m-5-1.north west);
    \draw[decorate,transform canvas={xshift=6.5em, yshift=0.2em},thick] (m-5-1.north west) -- node[right=2pt] {$\sigma(S_3)$} (m-9-1.north west);
        \draw[decorate,transform canvas={xshift=5.3em, yshift=0.1em},thick] (m-9-2.north west) -- node[right=2pt] {$\sigma^2(S_3)$} (m-13-2.north west);
        \draw[decorate,transform canvas={xshift=5.3em, yshift=-5.3em},thick] (m-9-2.north west) -- node[right=2pt] {$A$} (m-10-2.north west);
    
\end{tikzpicture}
\end{center}
\end{example}

In the next example the matrix $A$ takes a more complicated form.

\begin{example}[Symbolic Matrix for Replicated Data] \label{exe: symbolic matrix refined one-shot 4,1,2,M}
Consider the setting of Example \ref{exe: refined one-shot 4,1,2,M}, where $N=4$ and $r=2$. For $M = 2$ we get the symbolic matrix 
\begin{align} 
 S_2 = \begin{pmatrix}
1 & 1 & 2 & 2
\end{pmatrix}
.\end{align}

Following the procedure in Definition \ref{def: symbolic matrix} the symbolic matric $S_3$ is given by
\[ S_3 = \left( \begin{array}{cccc}
1 & 1 & 2 & 2 \\
1 & 2 & 2 & 1 \\
\multicolumn{4}{c}{A} 
\end{array} \right) .\]

The set $B = [2,S_2] = \{ (1,3), (1,4) \}$. Setting $b_1 = (1,3)$ and $b_2 = (1,4)$, then, $b_1 \prec b_2$. 

Since $\tau_1 (b_1) = (2,2)$ and $\tau_1 (b_2) = (2,3)$ it follows that $B_1 = \{(1,3), (2,2) \} $ and $B_2 = \{ (1,4) , (2,3) \}$.

Since $\pi (B_1) = \{ 3, 4 \}$ and $\pi (B_2) = \{2, 3 \}$, it follows that
\[ A = \left( \begin{array}{cccc}
3 &  &  & 3 \\
3 & 3 &  &  
\end{array} \right) \]
and therefore,
\begin{align} 
 S_3 = \begin{pmatrix}
1 & 1 & 2 & 2 \\
1 & 2 & 2 & 1 \\
3 &  &  & 3 \\
3 & 3 &  &  
\end{pmatrix}
.\end{align}

We will now determine $S_4$ using $S_3$. We will re-use the same variables as before so that they are in accordance with Definition \ref{def: symbolic matrix}. We have that

\[ S_3 = \left( \begin{array}{cccc}
1 & 1 & 2 & 2 \\
1 & 2 & 2 & 1 \\
3 &  &  & 3 \\
3 & 3 &  &  \\
1 & 2 & 2 & 1 \\
2 & 2 & 1 & 1 \\
& & 3 & 3 \\
3 & & & 3 \\
\multicolumn{4}{c}{A} 
\end{array} \right) .\]

The set $B = [3,S_3] = \{b_1 , \ldots, b_4 \}$ where $b_1 = (3,1)$, $b_2 = (3,4)$, $b_3 = (4,1)$, and $b_4 = (4,2)$.

The sets $B_1 = \{ (3,1) , (7,4) \}$, $B_2 = \{ (3,4) , (7,3) \}$, $B_3 = \{ (4,1) , (8,4) \}$, and $B_4 = \{ (4,2) , (8,1) \}$.

The sets $\pi(B_1) = \{ 1,4 \}$, $\pi(B_2) = \{ 3,4 \}$, $\pi(B_3) = \{ 1,4 \}$, and $\pi(B_4) = \{ 1,2 \}$, so that,
\[ A = \left( \begin{array}{cccc}
 & 4 & 4 &  \\
4 & 4 &  &  \\
  & 4 & 4 & \\
& & 4 & 4 
\end{array} \right) \]
and therefore,
\begin{align} 
 S_4 = \begin{pmatrix}
1 & 1 & 2 & 2 \\
1 & 2 & 2 & 1 \\
3 &  &  & 3 \\
3 & 3 &  &  \\
1 & 2 & 2 & 1 \\
2 & 2 & 1 & 1 \\
& & 3 & 3 \\
3 & & & 3 \\
 & 4 & 4 &  \\
4 & 4 &  &  \\
  & 4 & 4 & \\
& & 4 & 4 
\end{pmatrix}
.\end{align}

\end{example}

\subsection{Interpreting the Symbolic Matrix.} 

In this section we explain how to transform a symbolic matrix into a PIR scheme. 

\begin{definition}
Let $S_M$ be a symbolic matrix. The query structure of the symbolic matrix is obtained as follows. Each entry $k$ in the symbolic matrix $S_M$ represents $\binom{M}{k}$ $k$-queries, one for every combination of $k$ messages.
\end{definition}

\begin{example} \label{exe: symbolic to query}
The query structure of the symbolic matrix, $S_2$, in Example \ref{exe: symbolic matrix refined one-shot 4,1,2,M} is given in the following table.
\begin{table}[H]
\centering
\begin{tabular}{c c c c}
 Server $1$ &  Server $2$ & Server $3$ & Server $4$ \\
\toprule
$\bm{a}_1$ & $\bm{a}_2$ & $\bm{a}_3 + \bm{b}_3$ &  $\bm{a}_4 + \bm{b}_4$   \\
$\bm{b}_1$ & $\bm{b}_2$ & &
\end{tabular}
\caption{Query structure for $S_2$ in Example \ref{exe: symbolic matrix refined one-shot 4,1,2,M}.}
\end{table}

The query structure of the symbolic matrix, $S_3$, in Example \ref{exe: symbolic matrix refined one-shot 4,1,2,M} is given in the following table.

\begin{table}[H]
\centering
\begin{tabular}{c c c c}
 Server $1$ &  Server $2$ & Server $3$ & Server $4$ \\
\toprule
$\bm{a}_1$ & $\bm{a}_2$ & $\bm{a}_3 + \bm{b}_3$ &  $\bm{a}_5 + \bm{b}_5$   \\
$\bm{b}_1$ & $\bm{b}_2$ & $\bm{a}_4 + \bm{c}_3$ & $\bm{a}_6 + \bm{c}_5$ \\
$\bm{c}_1$ & $\bm{c}_2$ & $\bm{b}_4 + \bm{c}_4$ & $\bm{b}_6 + \bm{c}_6$ \\
\hdashline[2pt/1pt]
$\bm{a}_8$ & $\bm{a}_9 + \bm{b}_9$ & $\bm{a}_{11} + \bm{b}_{11}$ &  $\bm{a}_7$   \\
$\bm{b}_8$ & $\bm{a}_{10} + \bm{c}_9$ & $\bm{a}_{12} + \bm{c}_{11}$ &  $\bm{b}_7$ \\
$\bm{c}_8$ & $\bm{b}_{10} + \bm{c}_{10}$ & $\bm{b}_{12} + \bm{c}_{12}$ &  $\bm{c}_7$ \\
\hdashline[2pt/1pt]
$\bm{a}_{13} + \bm{b}_{13} + \bm{c}_{13}$ & & &  $\bm{a}_{14} + \bm{b}_{14} + \bm{c}_{14}$   \\
$\bm{a}_{15} + \bm{b}_{15} + \bm{c}_{15}$ & $\bm{a}_{16} + \bm{b}_{16} + \bm{c}_{16}$ & &  

\end{tabular}
\caption{Query structure for $S_3$ in Example \ref{exe: symbolic matrix refined one-shot 4,1,2,M}.}
\label{tab: query structure s3}
\end{table}

\end{example}

We need now to specify the distribution of the queries. We do this recursively on the number of messages.

For $M=2$ messages we choose the distribution of the queries as in Lemma \ref{lem:ref}. 

For $M=3$, the query structure for the symbolic matrix $S_2$ is as follows.

\begin{table}[H]
\centering
\begin{tabular}{c c c c c c}
Server $1$ & $\cdots$ & Server $r$ & Server $r+1$ & $\cdots$ & Server $N$ \\
\toprule
 $\bm{a}_1$ & $\cdots$ & $\bm{a}_r$ & $\bm{a}_{r+1}+\bm{b}_{r+1} $ & $\cdots$ & $\bm{a}_{2N -1}+\bm{b}_{2N -1}$  \\
 $\bm{b}_1$ & $\cdots$ & $\bm{b}_r$ & $\bm{a}_{r+2}+\bm{c}_{r+1} $ & $\cdots$ & $\bm{a}_{2N}+\bm{c}_{2N -1}$ \\
 $\bm{c}_1$ & $\cdots$ & $\bm{c}_r$ & $\bm{b}_{r+2}+\bm{c}_{r+2} $ & $\cdots$ & $\bm{b}_{2N}+\bm{c}_{2N}$
\end{tabular}
\caption{Query structure for $S_2$ when $M=3$.}
\label{tab:partial lift}
\end{table}

We call the queries of the form $\bm{b}_* + \bm{c}_*$ leftovers. If we ignore the leftovers we can choose the $\bm{b}$'s with the same distribution as in the case of $M=2$, and choose the $\bm{c}$'s with the same distribution as the $\bm{b}$'s. In this way, all the $\bm{a}$'s in Table \ref{tab:partial lift} can be decoded via the decoding equations of the one-shot scheme.

We can do the same for each matrix $S_2, \ldots, \sigma^{r-1}(S_2)$. Thus, in the symbolic matrix $S_3$ each entry of value $2$ has a corresponding leftover query. Using the notation of Definition \ref{def: symbolic matrix} there is one leftover query per entry in the set $B$.

Each entry of value $3$ in the matrix $A$ corresponds to a query of the form $\bm{a}_*+\bm{b}_*+\bm{c}_*$. Consider the leftover queries corresponding to the entries in the set $B_i$. If we choose the leftover queries corresponding to $B_i$ together with the terms $\bm{b}_*+\bm{c}_*$ corresponding to $\{ (i,j) : A_{ij} = 3 \}$ with the same distribution as the $b$'s in Lemma \ref{lem:ref}, we can decode each one of the $a$'s in the $i$-th row of $A$.

The same procedure is carried out for $M=4,5,\ldots$

\begin{example}
The queries in Table \ref{tab: query structure s3} satisfy the following properties:

\begin{itemize}
\item The $\bm{a}_i \in \mathbb{V}_1$, $\bm{b}_i \in \mathbb{V}_2$, and $\bm{c}_i \in \mathbb{V}_3$.
\item The vectors $\bm{a}_1, \ldots, \bm{a}_{16} \in \mathbb{V}_1$ are chosen uniformly at random and such that $\langle \bm{D} , \bm{a}_1 \rangle, \ldots, \langle \bm{D} , \bm{a}_{16} \rangle$ are linearly independent, which in this case means that $\bm{a}_1, \ldots, \bm{a}_{16}$ are linearly independent.

\item The vectors $\bm{b}_1,\bm{b}_2,\bm{b}_3,\bm{b}_5 \in \mathbb{V}_2$, $\bm{b}_8,\bm{b}_9,\bm{b}_{11},\bm{b}_7 \in \mathbb{V}_2$, $\bm{c}_1,\bm{c}_2,\bm{c}_3,\bm{c}_5 \in \mathbb{V}_3$, and $\bm{c}_8,\bm{c}_9,\bm{c}_{11},\bm{c}_7 \in \mathbb{V}_3$ are chosen with the same distribution as $\bm{b}_1 , \ldots, \bm{b}_4$ in Example \ref{exe: refined one-shot 4,1,2,M}.

\item The vectors $\bm{b}_{13} + \bm{c}_{13}, \bm{b}_{10} + \bm{c}_{10},\bm{b}_{4} + \bm{c}_{4},\bm{b}_{14} + \bm{c}_{14} \in \mathbb{V}_2 + \mathbb{V}_3$ and $\bm{b}_{15} + \bm{c}_{15}, \bm{b}_{16} + \bm{c}_{16},\bm{b}_{12} + \bm{c}_{12},\bm{b}_{6} + \bm{c}_{6} \in \mathbb{V}_2 + \mathbb{V}_3$ are chosen with the same distribution as $\bm{b}_1 , \ldots, \bm{b}_4$ in Example \ref{exe: refined one-shot 4,1,2,M}.
\end{itemize}
\end{example}

To find the rate of the lifted scheme we need to count the number of entries in the symbolic matrix of a specific value.

\begin{proposition} \label{pro:count}
Let $S_M$ be the symbolic matrix of a one-shot scheme with co-dimension $r$. Then,
\[ \#(k,S_M) =  \left( \frac{N-r}{r} \right)^{k-1} r^{M-1} \hspace{10pt} 1 \leq k \leq M \]
\end{proposition}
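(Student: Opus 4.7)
The plan is to prove the proposition by induction on $M$. The base case $M = 2$ is immediate from Equation \ref{eq:S_2}, since $S_2$ consists of exactly $r$ ones followed by $N-r$ twos, which matches $\#(1, S_2) = r = r^{2-1}$ and $\#(2, S_2) = N - r = \left(\frac{N-r}{r}\right) r^{2-1}$.

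For the inductive step, I would decompose $S_{M+1} = \lift(S_M)$ into two blocks: the stack $S_M, \sigma(S_M), \ldots, \sigma^{r-1}(S_M)$ on top, and the appended matrix $A$ below. Two elementary observations drive the count. First, since $\sigma$ is merely a cyclic permutation of columns, $\#(k, \sigma^{i}(S_M)) = \#(k, S_M)$ for every value $k$ and every $i$, so the top block contributes exactly $r \cdot \#(k, S_M)$ entries of value $k$. Second, by Definition \ref{def: symbolic matrix} the matrix $A$ contains only zeros and entries equal to $M+1$, so it contributes nothing to $\#(k, S_{M+1})$ when $k \leq M$.

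The main thing to verify is that every row of $A$ has exactly $r$ zeros and $N-r$ entries of value $M+1$. Writing $b_i = (i_0, j_0)$, the iterates $b_i, \tau_l(b_i), \ldots, \tau_l^{r-1}(b_i)$ have column coordinates $j_0, j_0 - 1, \ldots, j_0 - (r-1)$ taken cyclically modulo $N$. Since $r \leq N$, these $r$ values are distinct, so $|\pi(B_i)| = r$ and row $i$ of $A$ has exactly $N - r$ entries of value $M+1$. Since $A$ has $\#[M, S_M] = \#(M, S_M)$ rows, this yields $\#(M+1, S_{M+1}) = (N-r)\cdot \#(M, S_M)$.

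Combining these with the inductive hypothesis, for $1 \leq k \leq M$ one obtains $\#(k, S_{M+1}) = r \cdot \left(\frac{N-r}{r}\right)^{k-1} r^{M-1} = \left(\frac{N-r}{r}\right)^{k-1} r^{M}$, and for $k = M+1$, $\#(M+1, S_{M+1}) = (N-r)\cdot \left(\frac{N-r}{r}\right)^{M-1} r^{M-1} = \left(\frac{N-r}{r}\right)^{M} r^{M}$, completing the induction. I do not anticipate any significant obstacle; the only mild subtlety is the cyclic-distinctness argument for $\pi(B_i)$, which is a direct consequence of the shift structure of $\tau_l$.
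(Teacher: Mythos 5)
Your proof is correct, but it takes a genuinely different route from the paper. The paper's proof is a one-line assertion of a recursion in $k$ for fixed $M$, namely $\#(k,S_M)=\frac{N-r}{r}\,\#(k-1,S_M)$, which it then iterates down to $\#(1,S_M)=r^{M-1}$; no justification is given for either the recursion or the base value, beyond ``it follows from the lifting operation.'' You instead induct on $M$, splitting $S_{M+1}=\lift(S_M)$ into the stacked block $S_M,\sigma(S_M),\ldots,\sigma^{r-1}(S_M)$ (contributing $r\cdot\#(k,S_M)$ entries of each value $k\le M$, since $\sigma$ is a column permutation) and the appended block $A$ (contributing only zeros and $(M+1)$'s, with $N-r$ entries equal to $M+1$ per row because the $r$ column indices $j_0,j_0-1,\ldots,j_0-(r-1)\pmod N$ are distinct when $r\le N$). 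This yields $\#(k,S_{M+1})=r\,\#(k,S_M)$ for $k\le M$ and $\#(M+1,S_{M+1})=(N-r)\,\#(M,S_M)$, and the closed form follows. Your approach is more rigorous: it proves the combinatorial facts that the paper leaves implicit (in particular, that each row of $A$ has exactly $N-r$ nonzero entries), and it tracks the recursive construction of $S_M$ directly rather than asserting a within-$M$ relation between consecutive values of $k$. The two approaches are of course compatible — your two recursions combined give the paper's ratio $\#(k,S_M)/\#(k-1,S_M)=\frac{N-r}{r}$ — but yours is self-contained while the paper's requires the reader to reconstruct essentially the argument you wrote.
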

\begin{proof}
It follows from the lifting operation that
\begin{align*}
\#(k,S_M) &= \frac{N-r}{r} \#(k-1,S_M) \\
&= \left( \frac{N-r}{r} \right)^{k-1} \#(1,S_M) \\
&= \left( \frac{N-r}{r} \right)^{k-1} r^{M-1}.
\end{align*}
\end{proof}

\begin{restate}[The Lifting Theorem] \label{teo:lift}
Let $\mathcal{Q}$ be a one-shot scheme of co-dimension $r$. Then, refining and lifting  $\mathcal{Q}$ gives an $(N,K,T,M)$-PIR scheme $\mathcal{Q}'$ with rate
\[ R_{\mathcal{Q}'} = \frac{(N-r)N^{M-1}}{N^M - r^M} = \frac{N-r}{N \left( 1 - \left(\frac{r}{N}\right)^M\right)} .\]
\end{restate}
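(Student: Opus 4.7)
The plan is to proceed by induction on $M$, with the Refinement Lemma (Lemma~\ref{lem:ref}) serving as the base case $M=2$, and the $\lift$ operation of Definition~\ref{def: symbolic matrix} driving the inductive step. For the base case, Lemma~\ref{lem:ref} yields an $(N,K,T,2)$-scheme with rate $N/(N+r) = (N-r)N/(N^2-r^2)$, matching the claimed formula.

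For the inductive step, I would interpret $S_{M+1}=\lift(S_M)$ as follows. It consists of $r$ copies $S_M,\sigma(S_M),\ldots,\sigma^{r-1}(S_M)$ stacked on top of a matrix $A$ whose nonzero entries all equal $M+1$. The cyclic shifts are arranged so that, for every collection of $r$ entries of value $M$ taken one per copy of $S_M$, the corresponding ``leftover'' $M$-queries land in $r$ distinct servers; the $N-r$ remaining servers then receive the new $(M+1)$-queries supplied by the appropriate row of $A$. This places those $r$ leftover queries and $N-r$ new queries into exactly the template of the refined one-shot scheme of Lemma~\ref{lem:ref}, now treating ``message $1$'' versus ``messages $2,\ldots,M+1$ bundled together''. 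Inheriting the distributions from Lemma~\ref{lem:ref}, the scheme remains $T$-private, and every $\bm{a}$-query for the first message nested inside an $(M+1)$-query in $A$ can be decoded using the original one-shot decoding equations.

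With the construction in place, the rate is a direct count. By Proposition~\ref{pro:count}, $\#(k,S_M) = \left(\frac{N-r}{r}\right)^{k-1} r^{M-1}$. Each entry of value $k$ represents $\binom{M}{k}$ $k$-queries at one server, of which $\binom{M-1}{k-1}$ involve the desired first message and therefore contribute one useful $\bm{W}^1$-symbol once recursively decoded. Setting $x=(N-r)/r$ and using the identities $\sum_{k=1}^M \binom{M}{k} x^{k-1} = ((1+x)^M-1)/x$ and $\sum_{k=1}^M \binom{M-1}{k-1} x^{k-1} = (1+x)^{M-1}$, I would compute
\begin{align*}
\text{total download} &= r^{M-1}\sum_{k=1}^M \binom{M}{k}\, x^{k-1} = \frac{N^M - r^M}{N-r}, \\
\text{useful symbols} &= r^{M-1}\sum_{k=1}^M \binom{M-1}{k-1}\, x^{k-1} = N^{M-1},
\end{align*}
so the rate equals $N^{M-1}/\bigl((N^M-r^M)/(N-r)\bigr) = (N-r)N^{M-1}/(N^M-r^M)$, as claimed; the alternative form $\tfrac{N-r}{N(1-(r/N)^M)}$ follows by dividing numerator and denominator by $N^M$.

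The main obstacle is verifying the recursive privacy and correctness rather than the arithmetic. One must show (i) the shift-and-stack structure of $\lift$ truly deposits the leftover $M$-queries indexed by the set $B_i$ in Definition~\ref{def: symbolic matrix} into $r$ distinct servers, complementary to the support of the $i$-th row of $A$; and (ii) choosing the distributions of those leftovers analogously to the $\bm{b}$-queries in Lemma~\ref{lem:ref} guarantees both $T$-indistinguishability from the decoupled case and linear independence of the responses needed for decoding. Both rest on careful bookkeeping of the indexing maps $\tau_k$ and $\pi$; once these combinatorial facts are pinned down, the privacy and correctness of the $(M+1)$-scheme follow from those of the $M$-scheme together with one top-level application of the Refinement Lemma.
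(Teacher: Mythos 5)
Your proposal is correct and mirrors the paper's proof almost exactly: apply the Refinement Lemma to obtain $S_2$, iterate the $\lift$ operation to obtain $S_M$, invoke Proposition~\ref{pro:count} for $\#(k,S_M)$, and then compute $|\mathcal{Q}'^1|=\frac{N^M-r^M}{N-r}$ and $L=N^{M-1}$ via the same two binomial sums (your substitution $x=(N-r)/r$ is just an explicit packaging of the paper's simplification). The one cosmetic difference is that you frame this as an induction on $M$ and flag the combinatorial bookkeeping for privacy/correctness as the real work, whereas the paper simply asserts that privacy and correctness are inherited from the one-shot scheme and relies on the worked examples to convey the mechanism -- your cautionary remarks are reasonable but do not constitute a different proof route.
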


\begin{proof}
Given the one shot-scheme $\mathcal{Q}$, we apply the refinement lemma to obtain a scheme with symbolic matrix $S_2$ as in Equation \ref{eq:S_2}. The scheme $\mathcal{Q}'$ is defined as the one with symbolic matrix \mbox{$S_M = \lift^{M-2} (S_2)$} as in \eqref{eq:symbolicmatrix}.\footnote{The power in the expression $\lift^{M-2} (S_2)$ denotes functional composition.} Privacy and correctness of the scheme follow directly from the privacy and correctness of the one-shot scheme.

Next, we calculate the rate $R_{\mathcal{Q}'} = \frac{L}{|{\mathcal{Q}'}^1|}$.
Each entry $k$ of $S_M$ corresponds to $\binom{M}{k}$ $k$-queries, one for each combination of $k$ messages. Thus, using Proposition \ref{pro:count},
\begin{align*}
|{\mathcal{Q}'}^1| &= \sum_{k=1}^M \#(k,S_M) \binom{M}{k} \\
&= r^{M-1} \sum_{k=1}^M \left( \frac{N-r}{r} \right)^{k-1} \binom{M}{k} \\
&= \frac{N^M - r^M}{N-r}
\end{align*}

To find $L$ we need to count the queries which query $\bm{W}_1$. The number of $k$-queries which query $\bm{W}_1$ is $\binom{M-1}{k-1}$. Thus,
\begin{align*}
L &= \sum_{k=1}^M \#(k,S_M) \binom{M-1}{k-1} \\
&= r^{M-1} \sum_{k=1}^M \left( \frac{N-r}{r} \right)^{k-1} \binom{M-1}{k-1} \\
&= N^{M-1}
\end{align*}

Therefore, $R_{\mathcal{Q}'} = \frac{(N-r)N^{M-1}}{N^M - r^M}$.

\end{proof}

\section{Refining and Lifting Known Schemes} \label{sec:applications}

In this section, we refine and lift known one-shot schemes from the literature.

We first refine and lift the scheme described in \mbox{Theorem $3$} of \cite{taje18}. In our notation, this scheme is a one-shot scheme with co-dimension $r=\frac{NK-N+T}{K}$. 

\begin{theorem} \label{teo:salim}
Refining and lifting the scheme presented in \mbox{Theorem $3$} of \cite{taje18} gives an $(N,K,T,M)$-PIR scheme $\mathcal{Q}$ with
\begin{align} \label{eq:salim}
R_{\mathcal{Q}} = \frac{(N+T).(NK)^{M-1}}{(NK)^M - (NK-N+T)^M} .
\end{align} 
\end{theorem}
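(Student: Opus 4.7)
The plan is to apply Theorem \ref{teo:lift} (the Lifting Theorem) directly to the scheme of \cite{taje18}, Theorem~3, and then perform a routine algebraic simplification. As was already observed at the end of Section III-B, that scheme fits Definition \ref{def: one-shot scheme} and is a one-shot scheme with codimension $r = \frac{NK - N + T}{K}$, so no additional verification of the starting scheme is needed. As always, by choosing the sub-packetization $L$ to be an appropriate multiple of $K(N-r)$ per Remark \ref{rem: divisibility}, we may treat $r$ as an integer codimension without affecting the rate.

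By Theorem \ref{teo:lift}, refining and lifting any one-shot scheme of codimension $r$ produces an $(N,K,T,M)$-PIR scheme $\mathcal{Q}$ with rate
\[
R_{\mathcal{Q}} \;=\; \frac{(N-r)\,N^{M-1}}{N^{M} - r^{M}}.
\]
Plugging in $r = \frac{NK-N+T}{K}$ gives $N - r = \frac{N-T}{K}$ and $rK = NK - N + T$. Multiplying both the numerator and the denominator of the rate expression by $K^{M}$ clears all fractions: the numerator $(N-r)\,N^{M-1}\,K^{M}$ collapses to $(N-T)\,(NK)^{M-1}$, while the denominator $(N^{M}-r^{M})\,K^{M} = (NK)^{M} - (rK)^{M}$ becomes $(NK)^{M} - (NK - N + T)^{M}$. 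This is precisely the closed form in \eqref{eq:salim}.

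The main obstacle here is essentially nil: all of the substantive work, namely constructing the refinement-and-lifting transformation, verifying privacy and correctness, and deriving the general lifted rate, was already carried out in the proof of Theorem \ref{teo:lift}, while the reinterpretation of the scheme of \cite{taje18} as a one-shot scheme of the stated codimension was done in Section III-B. Theorem \ref{teo:salim} is therefore effectively a corollary: a one-line invocation of the Lifting Theorem followed by elementary algebra to repackage the resulting fraction in terms of $(NK)$ rather than $N$ alone.
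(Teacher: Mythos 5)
Your approach is the same as the paper's implicit one: Theorem \ref{teo:salim} is a direct corollary of Theorem \ref{teo:lift} applied to the one-shot scheme of codimension $r=(NK-N+T)/K$ identified in Section III-B with the scheme of \cite{taje18}, and the only real work is the algebraic repackaging of $\tfrac{(N-r)N^{M-1}}{N^M-r^M}$ in terms of $NK$. You carry that out correctly: $N-r=(N-T)/K$, so clearing denominators by $K^M$ gives numerator $(N-T)(NK)^{M-1}$ and denominator $(NK)^M-(NK-N+T)^M$.

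The flaw is the final sentence claiming that this ``is precisely the closed form in \eqref{eq:salim}.'' It is not: as printed, \eqref{eq:salim} has $(N+T)$ in the numerator, whereas your (correct) arithmetic yields $(N-T)$. The $(N+T)$ in the paper is a typographical error. One can see this in two ways: first, with $K=1$ the paper's formula would read $\tfrac{(N+T)N^{M-1}}{N^M-T^M}$, which exceeds $1$ when $T$ is close to $N$, an impossible PIR rate; second, the paper itself remarks immediately afterward that the rates of Theorems \ref{teo:salim} and \ref{teo:holl} coincide when $K=1$, and Theorem \ref{teo:holl} at $K=1$ gives $\tfrac{(N-T)N^{M-1}}{N^M-T^M}$, forcing the numerator of \eqref{eq:salim} to be $(N-T)(NK)^{M-1}$. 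So your derivation is right, and it in fact corrects a sign error in the theorem statement; you should say so explicitly rather than asserting literal agreement with \eqref{eq:salim}. As a secondary point, $r=(NK-N+T)/K$ is not an integer for all parameter choices, so the appeal to Remark \ref{rem: divisibility} (which addresses divisibility of $L$, not integrality of $r$) does not by itself justify treating $r$ as an integer codimension; the paper glosses over this as well, but it is worth being aware that the symbolic-matrix construction as defined requires integer $r$.
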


Next, we refine and lift the scheme in \cite{freij2016private}. In our notation, this scheme has co-dimension $r=K+T-1$. Thus, we obtain the first PIR scheme to achieve the rate conjectured to be optimal\footnote{The optimality of the rate in Equation \ref{eq:hollanti} was disproven in  \cite{jafar2018conjecture} for some parameters. For the remaining range of parameters, the PIR schemes obtained here in Theorem~\ref{teo:holl}, through refining and lifting, achieve the best rates known so far in the literature.} in \cite{freij2016private} for MDS coded data with collusions. 

\begin{restate}
Refining and lifting the scheme presented in \cite{freij2016private} gives an $(N,K,T,M)$-PIR scheme, $\mathcal{Q}$, with
\begin{align} \label{eq:hollanti}
R_{\mathcal{Q}} = \frac{(N-K-T+1)N^{M-1}}{N^M - (K+T-1)^M} = \frac{1 - \frac{K+T-1}{N}}{1 - \left( \frac{K+T-1}{N} \right)^M} .
\end{align}
\end{restate}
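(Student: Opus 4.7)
The plan is to invoke the Lifting Theorem (Theorem \ref{teo:lift}) as a black box and supply only the single missing ingredient, namely that the scheme of \cite{freij2016private} can be cast as a one-shot scheme in the sense of Definition \ref{def: one-shot scheme} with codimension $r = K+T-1$. Once this is established, the claimed rate is obtained by pure substitution: setting $r = K+T-1$ in $\dfrac{(N-r)N^{M-1}}{N^M - r^M}$ yields exactly $\dfrac{(N-K-T+1)N^{M-1}}{N^M - (K+T-1)^M}$, and dividing numerator and denominator by $N^M$ gives the second expression in \eqref{eq:hollanti}.

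The substantive step is therefore the first one: verifying that the Freij-Hollanti \emph{et al.} construction fits the one-shot template. Concretely, I would identify within their scheme (i) the $T$ uniformly random ``noise'' queries, (ii) a further $K-1$ query vectors that are deterministic linear combinations of those noise queries so that the total of $K+T-1$ noisy queries lies in a $T$-dimensional random subspace (guaranteeing Property~\ref{property1}, i.e.\ $T$-privacy), and (iii) the $N-(K+T-1)$ remaining queries in which a standard-basis ``retrieval'' vector $\bm{a}_j \in \mathbb{V}_1$ is added to another linear combination of the same $T$ random vectors. The fact that their construction is built from the star-product of an $(N,T)$-MDS ``query code'' and the $(N,K)$-MDS storage code yields a code of dimension $K+T-1$ evaluated componentwise against the servers' data, which is precisely the mechanism that produces the decoding equations required by Property~\ref{property3}; Property~\ref{property2} (linear independence of the informative responses $\langle \bm{D}_{r+j}, \bm{a}_j\rangle$) follows from the MDS property of the storage code on any $N-(K+T-1)$ coordinates.

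Having matched their scheme to Definition \ref{def: one-shot scheme} with $r = K+T-1$, I would simply cite Theorem \ref{teo:lift}: refining and lifting the scheme produces an $(N,K,T,M)$-PIR scheme whose rate is $\dfrac{(N-r)N^{M-1}}{N^M-r^M}$, and plugging in $r = K+T-1$ finishes the proof. The main obstacle I anticipate is not analytical but expository: the scheme in \cite{freij2016private} is originally described in the language of star products of generalized Reed--Solomon codes, so the bookkeeping needed to exhibit the explicit $\bm{q}_i$'s and $\bm{a}_j$'s in the format of Table \ref{table:oneshot} (and to check that the random part indeed satisfies Property~\ref{property1}) is the only place where one actually has to work, and it is entirely a translation exercise rather than a new argument.
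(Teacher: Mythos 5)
Your proposal is correct and follows exactly the paper's route: the paper itself proves this theorem merely by asserting (in the ``One-Shot Schemes from the Literature'' subsection and again immediately before the theorem) that the scheme of \cite{freij2016private} is a one-shot scheme with codimension $r = K+T-1$, then citing Theorem~\ref{teo:lift} and substituting. In fact you supply more justification than the paper does, since you sketch how the star-product structure realizes Properties~\ref{property1}--\ref{property3} of Definition~\ref{def: one-shot scheme}, whereas the paper leaves that translation implicit.
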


The rate of the scheme in Theorem \ref{teo:salim}, Equation \ref{eq:salim}, is upper bounded by the rate of the scheme in Theorem \ref{teo:holl}, Equation \ref{eq:hollanti}, with equality when either $K=1$ or $N=K+T$.

Now, we consider the case of replicated data ($K=1$) on $N$ servers with at most $T$ collusions. A $T$-threshold linear secret sharing scheme \cite{CDN15} can be transformed into the following one-shot PIR scheme.

\begin{table}[H]
\centering
\begin{tabular}{c  c  c  c c c}
Server $1$ & \ldots & Server $T$ & Server $T+1$ & \ldots & Server $N$ \\
\toprule
$\bm{q}_1$ &  & $\bm{q_T}$ & $\bm{q_{T+1}}+\bm{a}_1 $ & & $\bm{q_N + a_{N-T}}$ 
\end{tabular}
\caption{The $T$-threshold linear secret sharing scheme.}
\end{table}

\begin{restate}
Refining and lifting a $T$-threshold linear secret sharing scheme gives an $(N,1,T,M)$-PIR scheme $\mathcal{Q}$ with capacity-achieving rate 
\[ R_{\mathcal{Q}} = \frac{(N-T)N^{M-1}}{N^M - T^M} .\]
\end{restate}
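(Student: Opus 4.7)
The plan is to apply the Lifting Theorem (Theorem \ref{teo:lift}) directly to the one-shot scheme derived from a $T$-threshold linear secret sharing scheme, and then compare the resulting rate against the PIR capacity formula from \eqref{eq:jafarcapacity}.

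First, I would verify that the scheme presented in the table immediately preceding the statement is indeed a valid one-shot scheme and identify its codimension. In that construction, the first $T$ servers receive pure ``noise'' queries $\bm{q}_1,\ldots,\bm{q}_T$ drawn uniformly at random, while the remaining $N-T$ servers receive ``mixed'' queries of the form $\bm{q}_{T+i} + \bm{e}_i$. Since $\bm{q}_1,\ldots,\bm{q}_N$ form an $(N,T)$-MDS code, any $T$ queries are uniformly and independently distributed (giving $T$-privacy, Property 2), the informative vectors $\bm{e}_1,\ldots,\bm{e}_{N-T}$ yield linearly independent responses (Property 3), and each response $\langle \bm{D}_{T+i}, \bm{q}_{T+i}\rangle$ lies in the span of the first $T$ responses by the MDS property (Property 4). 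Hence, matching Definition \ref{def: one-shot scheme}, the codimension is $r=T$.

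Next, I would invoke Theorem \ref{teo:lift} with $K=1$ and $r=T$. The refinement lemma converts the one-shot scheme into an $(N,1,T,2)$-PIR scheme with rate $N/(N+T)$, and iterating the lifting construction through the symbolic matrices $S_2, S_3, \ldots, S_M$ produces an $(N,1,T,M)$-PIR scheme whose rate, by direct substitution into the formula in Theorem \ref{teo:lift}, is
\[
R_{\mathcal{Q}} \;=\; \frac{(N-r)\,N^{M-1}}{N^M - r^M} \;=\; \frac{(N-T)\,N^{M-1}}{N^M - T^M}.
\]
Finally, I would observe that this expression coincides exactly with the PIR capacity $C$ given in \eqref{eq:jafarcapacity} from \cite{sun2016capacity}, so the scheme is capacity-achieving.

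There is essentially no obstacle beyond bookkeeping: the correctness and privacy of the lifted scheme are inherited from the Lifting Theorem, the codimension computation is immediate from the structure of the secret-sharing-based one-shot scheme, and the rate identity is a direct substitution. The only subtlety to flag is ensuring the base field is large enough to support both the underlying MDS code of the secret sharing scheme and the linear independence conditions invoked in the refinement step (cf. Remark \ref{rem:fieldoneshot}); this is handled by taking $q$ sufficiently large, a standard assumption throughout the paper.
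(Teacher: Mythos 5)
Your proposal is correct and follows exactly the route the paper intends: identify the secret-sharing-based one-shot scheme as having codimension $r=T$, apply the Lifting Theorem with $K=1$, and match the resulting rate against the capacity expression in Equation~\eqref{eq:jafarcapacity}. The paper itself offers no separate proof for this corollary-style theorem, treating it as an immediate instantiation of Theorem~\ref{teo:lift}, so your bookkeeping (including the field-size caveat from Remark~\ref{rem:fieldoneshot}) is a faithful reconstruction.
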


This scheme has the same capacity achieving rate as the scheme presented in \cite{sun2016capacity} but with less queries.

\ifCLASSOPTIONcaptionsoff
  \newpage
\fi



%

%

\end{document}